\theoremstyle{definition}
\declaretheorem[qed=$\Box$, name=Definition, numberwithin=section]{definition}
\declaretheorem[qed=$\Box$, name=Example, sibling=definition]{example}
\declaretheorem[qed=$\blacksquare$, name=Observation, sibling=definition]{observation}
\declaretheorem[name=Corollary, refname={Corollary,Corollaries}, sibling=definition]{corollary}
\declaretheorem[name=Theorem, sibling=definition]{theorem}
\declaretheorem[name=Proposition, sibling=definition]{proposition}
\declaretheorem[name=Lemma, sibling=definition]{lemma}
\declaretheorem[name=Proof idea, style=remark, numbered=no]{proofidea}
\crefname{figure}{Figure}{Figures}
\algrenewcommand{\algorithmicrequire}{\textbf{Input:}}
\algrenewcommand{\algorithmicensure}{\textbf{Output:}}
\algrenewcommand\algorithmicindent{1em}
\algrenewcommand\textproc{\textit}
  \let\c@figure\c@algorithm
  \let\c@table\c@algorithm
\setlist[enumerate]{label=(\roman*)}
\newcommand\widebar[1]{\overline{#1}}
\newcommand\wts[1]{\mathrm{weights}_{#1}}
\renewcommand\hom[1]{\mathrm{hom}_{#1}}
\DeclareMathOperator{\dom}{dom}
\DeclareMathOperator{\fanout}{fan-out}
\DeclareMathOperator{\fromBrackets}{fromBrackets}
\DeclareMathOperator{\pos}{pos}
\DeclareMathOperator{\range}{rng}
\DeclareMathOperator{\rank}{rank}
\DeclareMathOperator{\supp}{supp}
\DeclareMathOperator{\sort}{sort}
\DeclareMathOperator{\yield}{yield}
\DeclareMathOperator{\toBrackets}{toBrackets}
\newcommand{\theBar}{\widebar{(⋅)}}
\newcommand{\DYCK}{\operatorname{DYCK}}
\newcommand{\REG}[1][]{\operatorname{REG}\ifstrempty{#1}{}{(#1)}}
\newcommand{\HOM}[1][]{\operatorname{HOM}\ifstrempty{#1}{}{(#1)}}
\newcommand{\mDYCK}[1][]{\ifstrempty{#1}{}{#1\text{-}}{\operator@font{}mDYCK}}
\newcommand{\cmDYCK}[1][]{\ifstrempty{#1}{}{#1\text{-}}{\operator@font{}mDYCK}_{\text{c}}}
\newcommand{\MCFG}[1][]{\ifstrempty{#1}{}{#1\text{-}}{\operator@font{}MCFG}}
\newcommand{\MCFL}[1][]{\ifstrempty{#1}{}{#1\text{-}}{\operator@font{}MCFL}}
\newcommand{\aHOM}[1][]{α{\operator@font{}HOM}\ifstrempty{#1}{}{(#1)}}
\newcommand{\upto}[1]{\ensuremath{[#1]}}
\newcommand{\sem}[1]{⟦#1⟧}
\newcommand{\lcm}{\mathrm{lcm}}
\newcommand{\pdpop}[1]{\operatorname{pop}(#1)}
\newcommand{\pdpush}[2]{\operatorname{push}(#1,#2)}
\newcommand{\proofpart}[1]{\par\smallskip\noindent\textbf{#1}\quad}
\newcommand{\proofpartns}[1]{\textbf{#1}\quad}
\title{A Chomsky-Schützenberger representation for weighted multiple context-free languages\footnote{This is an extended version of a paper with the same title presented at the 12th International Conference on Finite-State Methods and Natural Language Processing (FSMNLP 2015), \cite{Den15}.}}
\author{Tobias Denkinger \\
  \normalsize Faculty of Computer Science \\
	\normalsize Technische Universität Dresden \\
	\normalsize 01062 Dresden, Germany \\
  \normalsize \href{mailto:tobias.denkinger@tu-dresden.de}{\nolinkurl{tobias.denkinger@tu-dresden.de}}
}
\begin{document}

\maketitle
\begin{abstract}
	We prove a Chomsky-Schützenberger representation theorem for multiple context-free languages weighted over complete commutative strong bimonoids.
\end{abstract}
%%% Local Variables:
%%% mode: latex
%%% TeX-master: "cs-mcfg"
%%% End:

\tableofcontents
\clearpage

\section{Introduction}

Mildly context-sensitive languages receive much attention in the natural language processing community \cite{Kal10}.
Many classes of mildly context-sensitive languages are subsumed by the multiple context-free languages, e.g.
	the languages of
		head grammars,
		linear context-free rewriting systems \cite{SekMatFujKas91},
		combinatory categorial grammars \cite{VijWeiJos86,WeiJos88},
		linear indexed grammars \cite{Vij87},
		minimalist grammars, \cite{Mic01,Mic01a}, and
		finite-copying lexical functional grammars \cite{SekNakKajAndKas93}.

The Chomsky-Schützenberger (CS) representation for context-free languages \cite[Proposition~2]{ChoSch63} has been generalised to a variety of unweighted and weighted settings, e.g.
	context-free languages weighted with commutative semirings \cite[Theorem~4.5]{SalSoi78},
	tree adjoining languages \cite[Lemma~3.5.2]{Wei88},
	multiple context-free languages \cite[Theorem~3]{YosKajSek10},
	context-free languages weighted with unital valuation monoids \cite[Theorem~2]{DroVog13},
	yields of simple context-free tree languages \cite[Theorem~8.3]{Kan14},
	indexed languages (\cite[Theorems~1 and~2]{DusParSpe79}; \cite[Theorem~4]{FraVou15}; and \cite[Theorem~18]{FraVou16}), and
	automata with storage weighted with unital valuation monoids \cite[Theorem~11]{HerVog15}.

We give a generalisation to the case of multiple context-free languages weighted with a complete commutative strong bimonoid.
\Cref{sec:Dyck_and_mDyck,sec:weighted_CS} contain the main contributions of this paper.
The outline of these sections is:
\begin{itemize}
\item
	In order to obtain a CS representation for multiple context-free languages, Yoshinaka, Kaji, and Seki~\cite{YosKajSek10} introduce multiple Dyck languages.
	We give a more algebraic definition of multiple Dyck languages using congruence relations together with a decision algorithm for membership that is strongly related to these congruence relations (\cref{sec:Dyck_and_mDyck}).
\item
	In \cref{sec:weighted_CS} we provide a CS representation for weighted multiple context-free languages by means of a modular proof that first separates the weights from the given  grammar and then employs the result for the unweighted case (using the same overall idea as in Droste and Vogler~\cite{DroVog13}).
\end{itemize}

Since our proofs do not require distributivity, we can be slightly more general than complete commutative semirings.
The weight algebras considered here are therefore the complete commutative strong bimonoids.

%%% Local Variables:
%%% mode: latex
%%% TeX-master: "cs-mcfg"
%%% End:

\section{Preliminaries}
\label{sec:prelim}

In this section we recall formalisms used in this paper and fix some notation:
We denote by $ℕ$ the set of natural numbers (including zero).
For every $n ∈ ℕ$ we abbreviate $\{1, …, n\}$ by $\upto{n}$.
Let $A$ be a set.
The \emph{power set of $A$} is denoted by $\mathcal{P}(A)$.
Let $B$ be a finite set.
A \emph{partition of $B$} is a set $\mathfrak{P} ⊆ \mathcal P(B)$ where the elements of $\mathfrak{P}$ are non-empty, pairwise disjoint, and $⋃_{\mathfrak{p} ∈ \mathfrak{P}} \mathfrak{p} = B$.

% Let $Δ$ be an alphabet and $w ∈ Δ^*$ a string over $Δ$.
% The length of $w$ is denoted by $\lvert w \rvert$.
% For every $i ∈ \upto{\lvert w \rvert}$, we denote the $i$-th symbol of $w$ by $w(i)$.

Let $A$ and $B$ be sets and $A' ⊆ A$.
The \emph{set of functions from $A$ to $B$} is denoted by $A → B$, we still write $f: A → B$ rather then $f ∈ A → B$.
Let $f$ and $g$ be functions.
The \emph{domain and range of $f$} are denoted by $\dom(f)$ and $\range(f)$, respectively.
The \emph{restriction of $f$ to $A'$}, denoted by $f|_{A'}$, is a function from $A'$ to $B$ such that $f|_{A'}(a') = f(a')$ for every $a' ∈ A'$.
We denote the function obtained by applying $g$ after $f$ by $g ∘ f$.
Let $F$ be a set of functions and $B ⊆ ⋂_{f ∈ F} \dom(f)$.
The set $\{f(B) ∣ f ∈ F\} ⊆ \mathcal{P}(\range(f))$ is denoted by $F(B)$.
Let $G$ and $H$ be sets of functions.
The set $\{ h ∘ g ∣ h ∈ H, g ∈ G\}$ of functions is denoted by $H ∘ G$.

Let $A$ be a set and ${≈} ⊆ A × A$ a binary relation on $A$.
We call ${≈}$ an \emph{equivalence relation (on $A$)} if it is reflexive, symmetric, and transitive.
Let $a ∈ A$ and ${≈}$ be an equivalence relation.
The \emph{equivalence class of $a$ in ${≈}$}, denoted by $[a]_≈$, is $\{b ∈ A ∣ a ≈ b\}$.
Let $f: A^k → A$ be a function.
We say that ${≈}$ \emph{respects $f$} if for every $(a_1, b_1), …, (a_k, b_k) ∈ {≈}$ holds $f(a_1, …, a_k) ≈ f(b_1, …, b_k)$.
Now let $\mathcal{A}$ be an algebra with underlying set $A$.
We call ${≈}$ a \emph{congruence relation (on $\mathcal{A}$)} if $≈$ is an equivalence relation and respects every operation of $\mathcal{A}$.

% Let ${⊴} ⊆ A × A$ be a binary relation on $A$.
% We call ${⊴}$ a \emph{partial order (on $A$)} if it is reflexive, antisymmetric, and transitive.

\subsection*{Sorts}

We will use the concept of sorts to formalise restrictions on building terms (or trees), e.g. derivation trees or terms over functions.
One can think of sorts as data types in a programming language: Every concrete value has a sort (type) and every function requires its arguments to be of fixed sorts (types) and returns a value of some fixed sort (type).

Let $S$ be a countable set (of \emph{sorts}) and $s ∈ S$.
An \emph{$S$-sorted set} is a tuple $(B, \sort)$ where $B$ is a set and $\sort$ is a function from $B$ to $S$.
We denote the preimage of $s$ under $\sort$ by $B_s$ and abbreviate $(B, \sort)$ by $B$; $\sort$ will always be clear from the context.
% An \emph{$S$-ranked set} is an $(S^* × S)$-sorted set.
Let $A$ be an $(S^* × S)$-sorted set.
The \emph{set of terms over $A$}, denoted by $T_A$, is the smallest $S$-sorted set $T$ where $ξ = a(ξ_1, …, ξ_k) ∈ T_s$ if there are $s, s_1, …, s_k ∈ S$ such that $a ∈ A_{(s, s_1⋯s_k)}$ and $ξ_i ∈ T_{s_i}$ for every $i ∈ [k]$.
Let $ξ = a(ξ_1, …, ξ_k) ∈ T_A$.
The \emph{set of positions in $ξ$} is defined as $\pos(ξ) = \{ε\} ∪ \{ iu ∣ i ∈ [k], u ∈ \pos(ξ_i) \}$ and for every $π ∈ \pos(ξ)$ the \emph{symbol in $ξ$ at position $π$} is defined as $ξ(π) = a$ if $π = ε$ and as $ξ(π) = ξ_i(u)$ if $π=iu$ for some $i ∈ [k]$ and $u ∈ \pos(ξ_i)$.

\subsection*{Weight algebras}

A \emph{monoid} is an algebra $(\mathcal{A}, {⋅}, 1)$ where~${⋅}$ is associative and~$1$ is neutral with respect to~${⋅}$.
A \emph{bimonoid} is an algebra $(\mathcal{A}, {+}, {⋅}, 0, 1)$ where $(\mathcal{A}, {+}, 0)$ and $(\mathcal{A}, {⋅}, 1)$ are monoids.
We call a bimonoid \emph{strong} if $(\mathcal{A}, {+}, 0)$ is commutative and for every $a ∈ \mathcal{A}$ we have $0 ⋅ a = 0 = a ⋅ 0$.
Intuitively, a strong bimonoid is a semiring without distributivity.
A strong bimonoid is called \emph{commutative} if $(\mathcal{A}, {⋅}, 1)$ is commutative.
A commutative strong bimonoid is \emph{complete} if there is an infinitary sum operation ${∑}$ that
	maps every indexed family of elements of $\mathcal{A}$ to $\mathcal{A}$,
	extends ${+}$, and
	satisfies infinitary associativity and commutativity laws \cite[Section~2]{DroVog13}:
\begin{enumerate}
	\item $∑_{i ∈ ∅} a(i) = 0$;
	\item for every $j ∈ I: ∑_{i ∈ \{j\}} a(i) = a(j)$;
	\item for every $j, k ∈ I$ with $j ≠ k: ∑_{i ∈ \{j, k\}} a(i) = a(j) + a(k)$; and
	\item for every countable set $J$ and family $\bar I: J → \mathcal{P}(I)$ with $I = ⋃_{j ∈ J} \bar I(j)$ and for every $j, j' ∈ J$ with $j ≠ j' \implies \bar I(j) ∩ \bar I(j') = ∅$, we have $∑_{j ∈ J}∑_{i ∈ I(j)} a(i) = ∑_{i ∈ I} a(i)$.
\end{enumerate}
For the rest of this paper let $(\mathcal{A}, {+}, {⋅}, 0, 1)$, abbreviated by $\mathcal{A}$, be a complete commutative strong bimonoid.

\begin{example}\label{ex:bimonoids}
	We provide a list of complete commutative strong bimonoids \cite[Example~1]{DroStueVog10} some of which are relevant for natural language processing:
	\begin{itemize}
		\item Any complete commutative semiring, e.g.
		\begin{itemize}
			\item the \emph{Boolean semiring} \( \mathbb B = \big( \{0, 1\}, {∨}, {∧}, 0, 1 \big) \),
			\item the \emph{probability semiring} \( \operatorname{Pr} = \big( ℝ_{≥ 0}, {+}, {⋅}, 0, 1\big) \),
			\item the \emph{Viterbi semiring} \( \big( [0,1], {\max}, {⋅}, 0, 1 \big) \),
			\item the \emph{tropical semiring} \( \big( ℝ ∪ \{∞\}, {\min}, {+}, ∞, 0 \big) \),
			\item the \emph{arctic semiring} \( \big( ℝ ∪ \{-∞\}, {\max}, {+}, -∞, 0 \big) \),
		\end{itemize}
		\item any complete lattice, e.g.
		\begin{itemize}
			\item any non-empty finite lattice $\big( L, {∨}, {∧}, 0, 1 \big)$ where $L$ is a non-empty finite set,
			\item the lattice $(\mathcal{P}(A), {∪}, {∩}, ∅, A)$ where $A$ is an arbitrary set,
			\item the lattice $(ℕ, \lcm, \gcd, 1, 0)$,
		\end{itemize}
		\item the \emph{tropical bimonoid} \(\big( ℝ_{≥0} ∪ \{∞\}, {+}, {\min}, 0, ∞ \big)\),
		\item the \emph{arctic bimonoid} \(\big( ℝ_{≥0} ∪ \{-∞\}, {+}, {\max}, 0, -∞ \big)\), and
		\item the algebras \( \operatorname{Pr}_1 = ([0,1], {⊕_1}, {⋅}, 0, 1) \) and \( \operatorname{Pr}_2 = ([0,1], {⊕_2}, {⋅}, 0, 1) \) where $a ⊕_1 b = a + b - a ⋅ b$ and $a ⊕_2 b = \min \{a + b, 1\}$ for every $a, b ∈ [0,1]$.
	\end{itemize}
	where $ℝ$ and $ℝ_{≥ 0}$ denote the set of reals and the set of non-negative reals, respectively; ${+}$, ${⋅}$, ${\max}$, ${\min}$ denote the usual operations; ${∧}$, ${∨}$ denote disjunction and conjunction, respectively, for the boolean semiring and join and meet, respectively, for any non-empty finite lattice; and $\lcm$ and $\gcd$ are binary functions that calculate the least common multiple and the greatest common divisor, respectively.
	
	Also, there are some bimonoids that are interesting for natural language processing but are \emph{not} complete commutative strong bimonoids.
	E.g.
	\begin{itemize}
		\item the \emph{semiring of formal languages} \( \big( \mathcal{P}(Σ^*), {∪}, {⋅}, ∅, \{ε\} \big) \) where $Σ$ is an alphabet and ${⋅}$ is language concatenation, i.e. $L_1 ⋅ L_2 = \{uv ∣ u ∈ L_1, v ∈ L_2 \}$ for every $L_1, L_2 ⊆ Σ^*$; and
		\item the semiring \( \big( Σ^* ∪ \{∞\}, {∧}, {⋅}, ∞, ε \big) \) where $Σ$ is an alphabet, ${⋅}$ is concatenation, ${∧}$ calculates the longest common prefix of its arguments, and $∞$ is a new element that is neutral with respect to ${∧}$ and annihilating with respect to ${⋅}$ \cite{Moh00}.
	\end{itemize}
	Both examples are not commutative.
\end{example}

An \emph{$\mathcal{A}$-weighted language (over $Δ$)} is a function $L: Δ^* → \mathcal{A}$.
The \emph{support of $L$}, denoted by $\supp(L)$, is $\{w ∈ Δ^* ∣ L(w) ≠ 0\}$.
If $\lvert\supp(L)\rvert ≤ 1$, we call $L$ a \emph{monomial}.
We write $μ.w$ for $L$ if $L(w) = μ$ and for every $w' ∈ Δ^*∖ \{w\}$ we have $L(w') = 0$.

\subsection*{Recognisable languages}

For the many known results concerning finite state automata and regular languages, we will rely on \cite{HopUll69} and \cite{HopUll79}.
We nevertheless recall the basic definitions:

\begin{definition}
	A \emph{finite state automaton}, for short: FSA, is a tuple $ℳ = (Q, Δ, q_0, F, T)$ where
		$Δ$ is an alphabet (\emph{terminals}),
		$Q$ is a finite set (\emph{states}) disjoint from $Δ$,
		$q_0 ∈ Q$ (\emph{initial state}),
		$F ⊆ Q$ (\emph{final states}), and
		$T ⊆ Q × Δ^* × Q$ is a finite set (\emph{transitions}).
\end{definition}

A \emph{run in $ℳ$} is a string $κ ∈ (Q ∪ Δ)^*$ where
	for every substring of $κ$ of the form $quq'$ (for some $q, q' ∈ Q$ and $u ∈ Δ^*$) we have that
		$(q, u, q') ∈ T$, we say that the transition $(q, u, q')$ occurs in $κ$,
		the first symbol of $κ$ is $q_0$, and
		the last symbol of $κ$ is in $F$.
The \emph{word corresponding to $κ$} is obtained by removing the elements of $Q$ from $κ$.
The \emph{language of $ℳ$} is denoted by $L(ℳ)$.
The \emph{set of recognisable languages}, denoted by $\REG$, is the set of languages $L$ for which there is an FSA $ℳ$ with $L = L(ℳ)$.

% \begin{definition}
% An \emph{$\mathcal{A}$-weighted finite state automaton}, for short: $\mathcal{A}$-weighted FSA, is a tuple $ℳ = \big( Q, Δ, q_0, F, T, μ \big)$ where
% 	$\big( Q, Δ, q_0, F, T \big)$ is an FSA and
% 	$μ: T → \mathcal{A}$ (\emph{transition weights}).
% \end{definition}
% 
% A \emph{run in $ℳ$} is a run in $\big( Q, Δ, q_0, F, T \big)$.
% The \emph{weight of a run $κ$ in $ℳ$} is the product of the weights of transitions occurring in $κ$.
% The \emph{weight of a word $w$ in $ℳ$} is the sum of the weights of all runs $κ$ such that $w$ is the word corresponding to $κ$.
% The \emph{$\mathcal{A}$-weighted language induced by $ℳ$}, denoted by $⟦ℳ⟧$, is the function that assigns for every word $w$ the weight of $w$ in $ℳ$.
% An $\mathcal{A}$-weighted language is \emph{regular} if it is induced by some $\mathcal{A}$-weighted FSA.
% The set of $\mathcal{A}$-weighted regular languages is denoted by $\REG[\mathcal{A}]$.

\subsection*{Weighted string homomorphisms}

\begin{definition}
  Let $Δ$ and $Γ$ be alphabets and $g: Δ → (Γ^* → \mathcal{A})$ (i.e. a function that takes an element of $Δ$ and returns a function that takes an element of $Γ^*$ and returns an element of $\mathcal{A}$) such that $g(δ)$ is a monomial for every $δ ∈ Δ$.
  We define $\widehat{g}: Δ^* → (Γ^* → \mathcal{A})$ where for every $k ∈ ℕ$, $δ_1, …, δ_k ∈ Δ$, and $u ∈ Γ^*$ we have
  \[\widehat{g}(δ_1 ⋯ δ_k)(u) = ∑_{\substack{u_1, …, u_k ∈ Γ^* \\ u = u_1 ⋯
        u_k}} g\big(δ_1\big)\big(u_1\big) ⋅ … ⋅ g\big(δ_k\big)\big(u_k\big)\,.\]
  We call $\widehat{g}$ an \emph{$\mathcal{A}$-weighted (string) homomorphism}.
\end{definition}

Clearly, $\widehat{g}(u)$ is a monomial for every $u ∈ Δ^*$.
We call $\widehat{g}$ \emph{alphabetic} if there is a function $h: Δ → (Γ ∪ \{ε\} → \mathcal{A})$ with $\widehat{g} = \widehat{h}$.
If $\widehat{g}(u) = μ.w$ for $u ∈ Δ^*$, then we will sometimes say “$\widehat{g}$ maps $u$ to $w$” (leaving out the weight $μ$) or “$\widehat{g}$ weights $u$ with $μ$” (leaving out the word $w$).
Now assume that $\mathcal{A} = \mathbb{B}$ and we have $\lvert \supp(g(δ)) \rvert = 1$ for every $δ ∈ Δ$.
Then $g$ can be construed as a function from $Δ$ to $Γ^*$ and $\widehat{g}$ can be construed as a function from $Δ^*$ to $Γ^*$.
In this case we call $\widehat{g}$ a \emph{(string) homomorphism}.
% If moreover, $g$ is a function from $Δ$ to $Γ ∪ \{ε\}$, we call $\widehat{g}$ \emph{alphabetic}.
% 
The sets of all $\mathcal{A}$-weighted homomorphisms, $\mathcal{A}$-weighted alphabetic homomorphisms, homomorphisms, and alphabetic homomorphisms are denoted by $\HOM[\mathcal{A}]$, $\aHOM[\mathcal{A}]$, $\HOM$, and $\aHOM$, respectively.

\subsection*{Weighted multiple context-free languages}

% composition functions
We fix a set $X = \{ x_i^j ∣ i, j ∈ ℕ_+ \}$ of \emph{variables}.
% Variables serve as placeholders for strings.
Let $Δ$ be an alphabet.
The set of \emph{composition representations over $Δ$} is the $(ℕ^*×ℕ)$-sorted set $\mathrm{RF}_Δ$ where for every $s_1, …, s_{\ell}, s ∈ ℕ$ we define
	$X_{(s_1 ⋯ s_{\ell}, s)} = \{x_i^j ∣ i ∈ [\ell], j ∈ [s_i] \} ⊆ X$ and
	$(\mathrm{RF}_Σ)_{(s_1 ⋯ s_{\ell}, s)}$ as the set that contains $[u_1, …, u_s]_{(s_1 ⋯ s_{\ell}, s)}$ for every $u_1, …, u_s ∈ (Δ ∪ X_{(s_1 ⋯ s_{\ell}, s)})^*$.
We will often write $X_f$ instead of $X_{(s_1 ⋯ s_{\ell}, s)}$.
Let $f = [u_1, …, u_s]_{(s_1 ⋯ s_{\ell}, s)} ∈ \mathrm{RF}_Σ$.
The \emph{string function of $f$}, also denoted by $f$, is the function from $(Δ^*)^{s_1} × ⋯ × (Δ^*)^{s_{\ell}}$ to $(Δ^*)^s$ such that
\(
	f((w_1^1, …, w_1^{s_1}), …, (w_{\ell}^1, …, w_{\ell}^{s_{\ell}}))
		= (u_1', …, u_s')
\)
where $(u_1', …, u_s')$ is obtained from $(u_1, …, u_s)$ by replacing each occurrence of $x_i^j$ by $w_i^j$ for every $i ∈ [\ell]$ and $j ∈ [s_{\ell}]$.
The set of all string functions for some composition representation over $Δ$ is denoted by $\mathrm{F}_Δ$.
From here on we no longer distinguish between composition representations and string functions.
We define the \emph{rank of $f$}, denoted by $\rank(f)$, and the \emph{fan-out of $f$}, denoted by $\fanout(f)$, as $\ell$ and $s$, respectively.
The string function
	$f$ is called \emph{linear} if in $u_1 ⋯ u_s$ every element of $X_f$ occurs at most once,
	$f$ is called \emph{non-deleting} if in $u_1 ⋯ u_s$ every element of $X_f$ occurs at least once, and
	$f$ is called \emph{terminal-free} if $u_1, …, u_s ∈ X_f^*$.
% We call $f$ \emph{linear} (\emph{non-deleting}) if in $u_1 ⋯ u_s$ every element of $X$ occurs at most once (at least once, respectively).
The subscript is dropped from the string function if its sort is clear from the context.

Note that for every $s' ∈ ℕ^* × ℕ$, the set of linear terminal-free string functions of sort $s'$ is finite.

\begin{definition}
	A \emph{multiple context-free grammar (over $Δ$)}, for short: {($Δ$-)MCFG}, is a tuple $(N, Δ, S, P)$ where
		$N$ is a finite $ℕ$-sorted set (\emph{non-terminals}),
		$S ∈ N_1$ (\emph{initial non-terminal}), and
    $P$ is a finite set (\emph{productions}) such that
    \begin{align*}
      P ⊆_{\text{fin}} \big\{ (A, f, A_1 ⋯ A_ℓ) ∈ N × F_Δ × N^ℓ &∣ \sort(f) = (\sort(A_1)⋯ \sort(A_ℓ), \sort(A)), \\*
                                                                &\quad f \text{ is linear}, ℓ ∈ ℕ \big\}\text{.}
    \end{align*}
	We construe $P$ as an $(N^*×N)$-sorted set where for every $ρ = (A, f, A_1 ⋯ A_ℓ) ∈ P$ we have $\sort(ρ) = (A, A_1 ⋯ A_ℓ)$.
\end{definition}

Let $G = (N, Δ, S, P)$ be an MCFG and $w ∈ Δ^*$.
A production $(A, f, A_1 ⋯ A_ℓ) ∈ P$ is usually written as $A → f(A_1, …, A_ℓ)$; it inherits $\rank$ and $\fanout$ from $f$.
Also, $\rank(G) = \max_{ρ ∈ P} \rank(ρ)$ and $\fanout(G) = \max_{ρ ∈ P} \fanout(ρ)$.
MCFGs of fan-out at most $k$ are called \emph{$k$-MCFGs}.

The function $\yield: T_P → (Σ^*)^*$ assigns to every tree $d ∈ T_P$ the string obtained by projecting every production in $d$ to the contained function (i.e. the second component) and then interpreting the resulting term over $F_Δ$.

Let $A ∈ N$.
The \emph{set of subderivations in $G$ from $A$}, denoted by $D_G(A)$, is the set of all terms over $P$ with sort $A$, i.e. $D_G(A) = (T_P)_A$.
The \emph{set of derivations in $G$} is $D_G = D_G(S)$.
Let $w ∈ Σ^*$.
The \emph{set of derivations of $w$ in $G$} is $D_G(w) = \{ d ∈ D_G ∣ \yield(d) = (w)\}$.

The \emph{language of $G$} is $L(G) = \{ w ∈ Δ^* ∣ D_G(w) ≠ ∅ \}$.
A language $L$ is called \emph{multiple context-free} if there is an MCFG $G$ with $L = L(G)$.
The \emph{set of multiple context-free languages (for which a $k$-MCFG exists)} is denoted by $\MCFL$ ($\MCFL[k]$, respectively).

% Let $k ∈ ℕ$.
The language class $\MCFL[k]$ is a substitution-closed full abstract family of languages \cite[Theorem~3.9]{SekMatFujKas91}.
In particular, $\MCFL[k]$ is closed under intersection with regular languages and under homomorphisms.
% Hence, $\MCFL$ is also a substitution-closed full abstract family of languages and closed under intersection with regular languages and under homomorphisms.

\begin{definition}
	An \emph{$\mathcal{A}$-weighted MCFG (over $Δ$)} is a tuple
		$(N, Δ, S, P, μ)$
	such that
		$(N, Δ, S, P)$ is an MCFG and
		$μ: P → \mathcal{A} ∖ \{0\}$ (\emph{weight assignment}).
\end{definition}

Let $G = (N, Δ, S, P, μ)$ be an $\mathcal{A}$-weighted MCFG and $w ∈ Δ^*$.
The \emph{set of derivations of $w$ in $G$} is the set of derivations of $w$ in $(N, Δ, S, P)$.
$G$ inherits $\fanout$ from $(N, Δ, S, P)$;
$\mathcal{A}$-weighted MCFGs of fan-out at most $k$ are called \emph{$\mathcal{A}$-weighted $k$-MCFGs}.
We define a function $\widehat{μ}: D_G → \mathcal{A}$ that applies $μ$ at every position of a given derivation and then multiplies the resulting values (in any order, since ${⋅}$ is commutative).

The \emph{$\mathcal{A}$-weighted language induced by $G$} is the function $\sem{G}: Δ^* → \mathcal{A}$ where for every $w ∈ Δ^*$ we have $\sem{G}(w) = ∑_{d ∈ D_G(w)} \widehat{μ}(d)$.
Two ($\mathcal{A}$-weighted) MCFGs are \emph{equivalent} if they induce the same ($\mathcal{A}$-weighted) language.
An $\mathcal{A}$-weighted language $L$ is called \emph{multiple context-free (of fan-out $k$)} if there is an $\mathcal{A}$-weighted $k$-MCFG $G$ such that $L = \sem{G}$;
$\MCFL[k](\mathcal{A})$ denotes the \emph{set of multiple context-free $\mathcal{A}$-weighted languages of fan-out $k$}.

\begin{example}\label{ex:wmcfg}
	Consider the $\mathrm{Pr}_2$-weighted MCFG $G = \big( N, Δ, S, P, μ  \big)$
	where
		$N_1 = \{S\}$,
		$N_2 = \{A, B\}$,
		$N = N_1 ∪ N_2$,
		$Δ = \{a, b, c, d\}$, and
		$P$ and $μ$ are given by
	\begin{align*}
		P: \enspace
		ρ_1 &= S → [x_1^1x_2^1x_1^2x_2^2](A, B)
		& μ: \enspace
		μ(ρ_1) &= 1 \\*
		ρ_2 &= A → [ax_1^1, cx_1^2](A)
		& μ(ρ_2) &= 1/2 \\
		ρ_3 &= B → [bx_1^1,dx_1^2](B)
		& μ(ρ_3) &= 1/3 \\
		ρ_4 &= A → [ε,ε]()
		& μ(ρ_4) &= 1/2 \\*
		ρ_5 &= B → [ε,ε]()
		& μ(ρ_5) &= 2/3\;.
	\end{align*}
	We observe that $\supp(\sem{G}) = \{a^mb^nc^md^n ∣ m, n ∈ ℕ\}$ and for every $m, n ∈ ℕ$ we have
	\begin{align*}
		\sem{G}(a^mb^nc^md^n)
		&= μ(ρ_1) ⋅ \big(μ(ρ_2)\big)^m ⋅ μ(ρ_4) ⋅ \big(μ(ρ_3)\big)^m ⋅ μ(ρ_5) \\*
		&= 1/(2^m ⋅ 3^{n+1})\text{.}
	\end{align*}
	The only derivation of $w = ac$ in $G$ is shown in \cref{fig:deriv}, its weight and hence also the weight of $w$ is $1/(2^1 ⋅ 3^{0+1}) = 1/6$.
\end{example}

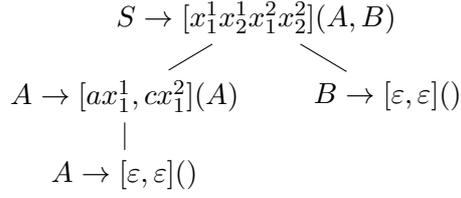
\begin{figure}
  \centering
	\begin{tikzpicture}[level distance=2.7em]
	\node {$S → [x_1^1x_2^1x_1^2x_2^2](A, B)$}
		child [sibling distance=9em] {node {$A → [ax_1^1, cx_1^2](A)$}
%			child {node {$A → [ax_1^1, cx_1^2](A)$}
				child {node {$A → [ε,ε]()$}}
%			}
		}
%		child [sibling distance=9em] {node {$B → [bx_1^1,dx_1^2](B)$}
			child [sibling distance=9em] {node {$B → [ε,ε]()$}};
%		};
\end{tikzpicture}
	\caption{Only derivation of $ac$ in $G$ (\cref{ex:wmcfg})}
	\label{fig:deriv}
\end{figure}

A non-terminal is called \emph{productive in an ($\mathcal{A}$-weighted) MCFG} if there is at least one subderivation starting from this non-terminal.
It is obvious that every ($\mathcal{A}$-weighted) $k$-MCFL can be recognised by an ($\mathcal{A}$-weighted) $k$-MCFG that only has productive non-terminals.

\subsubsection*{Non-deleting normal form}

An ($\mathcal A$-weighted) MCFG is called \emph{non-deleting} if the string function in every production is linear and non-deleting.
Seki, Matsumura, Fujii, and Kasami~\cite[Lemma~2.2]{SekMatFujKas91} proved that for every $k$-MCFG there is an equivalent non-deleting $k$-MCFG.
We generalise this to $\mathcal A$-weighted MCFGs.

\begin{lemma}
% \hspace{-.5em}\footnote{The proof can be found in \citet[Lemma~5]{Den15}.}
\label{lem:normalform}
	For every $\mathcal{A}$-weighted $k$-MCFG there is an equivalent non-deleting $\mathcal{A}$-weighted $k$-MCFG.
\end{lemma}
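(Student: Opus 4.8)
The plan is to adapt the unweighted construction of Seki et al.\ \cite[Lemma~2.2]{SekMatFujKas91} and track what happens to the production weights. The unweighted proof eliminates deletion by a two-step transformation: first make every string function non-deleting by "remembering" which arguments are actually used, then make it linear. Both steps are purely structural — they rewrite the composition representations and the non-terminal set, but they never merge or split derivations in a way that would change their multiplicity. So the weighted version should follow by giving each new production the same weight as the (unique) old production it was derived from, and then checking that $\widehat{\mu}$ is preserved along the induced bijection on derivation trees.

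\textbf{Step 1 (refine the non-terminals by argument-usage patterns).} For a non-terminal $A$ of sort $s$, I would introduce copies $A^U$ indexed by subsets $U \subseteq [s]$ recording which of the $s$ components $A$ contributes are "live" (non-deleted higher up). For each production $\rho = A \to f(A_1, \dots, A_\ell)$ and each choice of live-sets $U_1, \dots, U_\ell$ for the children, one reads off from $f$ which components of $A$ are live, obtaining $U$; one then restricts $f$ to the variables $x_i^j$ with $j \in U_i$, projecting away the deleted coordinates of each child, to get a non-deleting (and still linear) string function $f'$, and adds the production $\rho' = A^U \to f'(A_1^{U_1}, \dots, A_\ell^{U_\ell})$ with $\mu(\rho') := \mu(\rho)$. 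The new initial non-terminal is $S^{\{1\}}$. Since each $\rho'$ comes from exactly one $(\rho, U_1, \dots, U_\ell)$, and conversely each derivation in the new grammar projects to a derivation in the old one by forgetting superscripts, one gets a sort-preserving bijection between $D_{G'}(w)$ and $D_G(w)$ for every $w$; crucially this is where one must argue that a derivation in the old grammar uses a fixed live-set at every node (determined top-down from $\{1\}$ at the root), so the decoration is forced and the correspondence is genuinely bijective. The fan-out does not increase because $U \subseteq [s]$, and the weight of corresponding derivations agrees node-by-node.

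\textbf{Step 2 (restore linearity — actually unnecessary).} Here I would note that the productions of an MCFG are \emph{already} required to be linear by definition, and that the restriction operation in Step~1 preserves linearity (deleting occurrences of variables cannot create duplicates). So only non-deletion needs to be engineered; the resulting grammar $G'$ is non-deleting \emph{and} linear, hence in the required normal form. It remains to observe $\sem{G'} = \sem{G}$: for each $w$, $\sem{G'}(w) = \sum_{d' \in D_{G'}(w)} \widehat{\mu}(d') = \sum_{d \in D_G(w)} \widehat{\mu}(d) = \sem{G}(w)$, using the bijection $d' \mapsto d$ and $\widehat{\mu}(d') = \widehat{\mu}(d)$, which holds because the bijection matches productions one-to-one with equal weights at every position and ${\cdot}$ is commutative.

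\textbf{The main obstacle} is purely bookkeeping: making the "which components are live" analysis precise enough that the map from new derivations to old ones is visibly a bijection, rather than merely a surjection. One has to be careful that a deleted component of a child is never referenced, so that restricting $f$ is well defined and the child can legitimately be asked for only its live components; and that $U$ is genuinely a function of $f$ and $(U_1,\dots,U_\ell)$, not a free choice. A secondary subtlety is the treatment of the empty string: a component could be "live" yet forced to equal $\varepsilon$, which is harmless here (we are not removing $\varepsilon$-rules, only deletion), but the statement and the invariant should be phrased so as not to accidentally claim more. Once the decoration is pinned down, weight preservation is immediate because we never touch $\mu$.
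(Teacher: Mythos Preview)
Your approach is essentially the paper's: both adapt step~2 of Procedure~1 in Seki, Matsumura, Fujii, and Kasami~\cite[Lemma~2.2]{SekMatFujKas91}, decorate each non-terminal with the set of (live/deleted) components, give every new production the weight of the unique old production it was built from, and then argue a structure-preserving bijection on derivation trees so that $\widehat{\mu}$ transports along it. The paper indexes by the set $\Psi$ of \emph{deleted} components (new start symbol $S[\varnothing]$), you by the set $U$ of \emph{live} ones (new start symbol $S^{\{1\}}$); this is cosmetic.

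One concrete slip in your Step~1: you write that for a free choice of $U_1,\dots,U_\ell$ one ``reads off from $f$ which components of $A$ are live, obtaining $U$''. That direction is backwards --- liveness propagates \emph{top-down}: given $U$ for the parent and the rule $\rho = A \to [u_1,\dots,u_s](A_1,\dots,A_\ell)$, one computes $U_i = \{\, j : x_i^j \text{ occurs in some } u_k \text{ with } k \in U \,\}$, and $f'$ keeps only the components $u_k$ with $k \in U$. With your bottom-up phrasing $U$ is simply not determined by $(f,U_1,\dots,U_\ell)$, and restricting $f$ to $\{x_i^j : j \in U_i\}$ for an arbitrary choice of the $U_i$ need not yield a non-deleting function. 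You yourself say ``determined top-down from $\{1\}$ at the root'' a few lines later, so this is an exposition glitch rather than a real gap; once the direction is fixed, the bijection and weight preservation go through exactly as in the paper.
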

\begin{proofidea}
	We modify the construction for the unweighted case \cite[Lemma~2.2]{SekMatFujKas91} such that it preserves the structure of derivations.
	Then a weight assignment can be defined in an obvious manner.
\end{proofidea}
\begin{proof}
	Let $G = (N, Δ, S, P, μ)$ be an $\mathcal{A}$-weighted $k$-MCFG.
	When examining the proof of Seki, Matsumura, Fujii, and Kasami~\cite[Lemma~2.2]{SekMatFujKas91}, we notice that only step~2 of Procedure~1 deals with non-deletion.
	We construct $N'$ and $P'$ from $(N, Δ, S, P)$ by step~2 of Procedure~1, but drop the restriction that $Ψ ≠ \upto{\sort(A)}$.\footnote{This construction may therefore create productions of fan-out 0.}
	
	Let $g: P' → P$ assign to every $ρ' ∈ P'$ the production in $G$ it has been constructed from.
	Furthermore, let $\widehat{g}: D_{G'} → D_G$ be the function obtained by applying $g$ point-wise.
	We show the following hypothesis by induction on the structure of subderivations:
	
	\noindent
	\emph{Induction hypothesis:} For every $A ∈ N$ and $Ψ ∈ M(A): \widehat{g}$ is a bijection between $D_{G'}(A[Ψ])$ and $D_G(A)$.
	
	\noindent
	\emph{Induction step:} Let $d ∈ D_G(A)$ and $Ψ ∈ M(A)$ with $d = ρ(d_1, …, d_k)$ for some production $ρ ∈ P$ and derivations $d_1 ∈ D_G(A_1)$,~$…$, $d_k ∈ D_G(A_k)$.
	The construction defines $Ψ_1 ∈ M(A_1)$,~$…$, $Ψ_k ∈ M(A_k)$ and a production $ρ'$ which is unique for every $ρ$ and $Ψ$.
	By the induction hypothesis, we know that there are derivations $d_1', …, d_k'$ which are unique for $(d_1, Ψ_1), …, (d_k, Ψ_k)$, respectively.
	Therefore, $d' = ρ(d_1', …, d_k')$ is unique for $d$ and $Ψ$.
	Hence for every $Ψ$, $\widehat{g}$ induces a bijection between $D_{G'}(A[Ψ])$ and $D_G(A)$.

	By construction, the new start symbol is $S[∅]$; hence for the elements of $D_{G'}$, we set $Ψ = ∅$ and by induction hypothesis we obtain that $\widehat{g}$ is bijective.
	Since $\widehat{g}$ preserves the structure of derivations and is a bijection we obtain $\widehat{μ ∘ g} = \widehat{μ} ∘ \widehat{g}$.
	Hence $\sem{(N', Δ, S[∅], P', μ ∘ g)} = \sem{G}$.
	The fan-out is not increased by this construction.
\end{proof}

%%% Local Variables:
%%% mode: latex
%%% TeX-master: "cs-mcfg"
%%% End:

\section{Multiple Dyck languages}
\label{sec:Dyck_and_mDyck}

According to Kanazawa~\cite[Section~1]{Kan14} there is no definition of multiple Dyck languages using congruence relations.
We close this gap by giving such a definition (\cref{def:mDYCK}).

\subsection{The original definition}\label{sec:definition-mDYCK}

We recall the definition of multiple Dyck languages \cite[Definition~1]{YosKajSek10}:
\begin{definition}
Let
	$Δ$ be a finite $ℕ$-sorted set,\footnote{In Yoshinaka, Kaji, and Seki~\cite{YosKajSek10}, $ℕ$-sorted sets are called indexed sets and $\sort$ is denoted as $\dim$.}
	$\theBar$ be a bijection between $Δ$ and some alphabet $\widebar{Δ}$,
	$k = \max_{δ ∈ Δ} \sort(δ)$, and
	$r ≥ k$.
The \emph{multiple Dyck grammar with respect to $Δ$} is the $k$-MCFG
	$G_Δ = \big(\{A_1, …, A_k\}, \widehat{Δ}, A_1, P \big)$ where
	$\widehat{Δ} = \{ δ^{[i]}, \bar{δ}^{[i]} ∣ δ ∈ Δ, i ∈ \upto{\sort(δ)}\}$,
	$\sort(A_i) = i$ for every $i ∈ \upto{k}$, and
	$P$ is the smallest set such that
\begin{enumerate}
	\item
		for every
			linear non-deleting\footnote{We add the restriction “non-deleting” in comparison to the original definition since the proof of Lemma~1 in Yoshinaka, Kaji, and Seki~\cite{YosKajSek10} only uses non-deleting rules.} terminal-free string function $f ∈ (F_Δ)_{(s_1\cdots s_ℓ, s)}$
		with
			$ℓ ∈ \upto{r}$ and
			$s_1, …, s_ℓ, s ∈ \upto{k}$
		we have \[A_s → f(A_{s_1}, …, A_{s_ℓ}) ∈ P\;\text{,}\]
	\item\label{def:Dyck-MCFG:2}
		for every
			$δ ∈ Δ$ with sort $s$ we have
			\[ A_s → \big[δ^{[1]} x_1^1 \bar{δ}^{[1]}, …, δ^{[s]} x_1^s \bar{δ}^{[s]}\big](A_s) ∈ P\;\text{, and}\]
	\item\label{def:Dyck-MCFG:3}
		for every
			$s ∈ \upto{k}$ we have
		\[A_s → [u_1, …, u_s](A_s) ∈ P\] where
			$u_i ∈ \big\{x_i,\; x_iδ^{[1]}\bar{δ}^{[1]},\; δ^{[1]}\bar{δ}^{[1]}x_i ∣ δ ∈ Δ_1\big\}$ for every $i ∈ \upto{s}$.
\end{enumerate}
The \emph{multiple Dyck language with respect to $Δ$}, denoted by $\mathit{mD}(Δ)$, is $L(G_Δ)$.
We call $\max_{δ ∈ Δ} \sort(δ)$ the \emph{dimension of $\mathit{mD}(Δ)$}.
The \emph{set of multiple Dyck languages of dimension at most $k$} is denoted by $\mDYCK[k]$.
\end{definition}

\subsection{Congruence multiple Dyck languages}

For the rest of this section let $Σ$ be an alphabet.
Also let $\widebar{Σ}$ be a set (disjoint from $Σ$) and $\theBar$ be a bijection between $Σ$ and $\widebar{Σ}$.
Intuitively $Σ$ and $\widebar{Σ}$ are sets of opening and closing parentheses and $\theBar$ matches an opening to its closing parenthesis.

We define $≡_Σ$ as the smallest congruence relation on the free monoid $(Σ ∪ \widebar{Σ})^*$ where for every $σ ∈ Σ$ the cancellation rule $σ \widebar{σ} ≡_{Σ} ε$ holds.
The \emph{Dyck language with respect to $Σ$}, denoted by $\mathit{D}(Σ)$, is $[ε]_{≡_Σ}$.
The \emph{set of Dyck languages} is denoted by $\DYCK$.

\begin{example}\label{ex:Dyck}
	Let
		$Σ = \{(, ⟨, [, ⟦\}$.
		We abbreviate $\bar{(}$, $\bar{⟨}$, $\bar{[}$, and $\bar{⟦}$ by $)$, $⟩$, $]$, and $⟧$, respectively.
	Then we have for example
	\(
		⟦()⟧⟨⟩()
		≡_Σ ⟦⟧⟨⟩
		≡_Σ ⟦⟧
		≡_Σ ε\) and \(
		(⟦)⟧⟨⟩()
		≡_Σ (⟦)⟧()
		≡_Σ (⟦)⟧
		≢_Σ ε
	\).
\end{example}

Let $\mathfrak{P}$ be a partition of $Σ$.
We define $≡_{Σ,\mathfrak{P}}$ as the smallest congruence relation on the free monoid $(Σ ∪ \widebar{Σ})^*$ such that
	if $v_1 \cdots v_ℓ ≡_{Σ, \mathfrak{P}} ε$ with $v_1, …, v_ℓ ∈ \mathit{D}(Σ)$, then the \emph{cancellation rule}
	\begin{align*}
		u_0 σ_1 v_1 \widebar{σ_1} u_1 \cdots σ_ℓ v_ℓ \widebar{σ_ℓ} u_ℓ
		≡_{Σ, \mathfrak{P}} u_0 \cdots u_ℓ
	\end{align*}
	holds for every $\{σ_1, …, σ_ℓ\} ∈ \mathfrak{P}$ and $u_0,\allowbreak …,\allowbreak u_ℓ ∈ \mathit{D}(Σ)$.
Intuitively, every element of $\mathfrak{P}$ denotes a set of \emph{linked} opening parentheses, i.e. parentheses that must be consumed simultaneously by $≡_{Σ, \mathfrak{P}}$.
% Also the sequence of words inside the linked parentheses must be reducible to $ε$.

\begin{definition}\label{def:mDYCK}
	The \emph{congruence multiple Dyck language with respect to $Σ$ and $\mathfrak{P}$}, denoted by $\mathit{mD}_{\text{c}}(Σ, \mathfrak{P})$, is $[ε]_{≡_{Σ, \mathfrak{P}}}$.
\end{definition}

\begin{example}\label{ex:multipleDyck}
	Let
		$Σ = \{(, ⟨, [, ⟦\}$ and
		$\mathfrak{P} = \{ \mathfrak{p}_1, \mathfrak{p}_2 \}$ where
			$\mathfrak{p}_1 = \{ {(}, {⟨} \}$ and
			$\mathfrak{p}_2 = \{ {[}, {⟦} \}$.
		We abbreviate $\bar{(}$, $\bar{⟨}$, $\bar{[}$, and $\bar{⟦}$ by $)$, $⟩$, $]$, and $⟧$, respectively.
	Then we have for example
		\( ⟦()⟧[⟨⟩] ≡_{Σ, \mathfrak{P}} ε \) since
			$\mathfrak{p}_2 = \{ {[}, {⟦} \} ∈ \mathfrak{P}$,
			$()⟨⟩ ≡_{Σ, \mathfrak{P}} ε$, and
			$u_0 = u_1 = u_2 = ε$.
	But \( ⟦()⟧⟨[]⟩ ≢_{Σ, \mathfrak{P}} ε \) since when instantiating the cancellation rule with any of the two elements of $\mathfrak{P}$, we can not reduce $⟦()⟧⟨[]⟩$:
	\begin{enumerate}
		\item
			If we choose $\{ σ_1, σ_2 \} = \{{⟦}, {[}\}$ then we would need to set $u_1 = {⟨}$ and $u_2 = {⟩}$, but they are not in $\mathit{D}(Σ)$, also $() ≢_{Σ, \mathfrak{P}} ε$;
		\item
			If we choose $\{ σ_1, σ_2 \} = \{{(}, {⟨}\}$ then we would need to set $u_0 = {⟦}$ and $u_1 = {⟧}$, but they are not in $\mathit{D}(Σ)$, also $[] ≢_{Σ, \mathfrak{P}} ε$.
	\end{enumerate}
	Hence $⟦()⟧[⟨⟩], ()⟨⟩ ∈ \mathit{mD}_{\text{c}}(Σ, \mathfrak{P})$ and $⟦()⟧⟨[]⟩ ∉ \mathit{mD}_{\text{c}}(Σ, \mathfrak{P})$.
\end{example}

\begin{observation}\label{obs:mDyck_permutation_rule}
	From the definition of $≡_{Σ, \mathfrak{P}}$ it is easy to see that for every $u_1, …, u_k ∈ \mathit{D}(Σ)$ and $v_1, …, v_ℓ ∈ \mathit{D}(Σ)$ we have that $u_1 \cdots u_k,\allowbreak v_1 \cdots v_ℓ ∈ \mathit{mD}_{\text{c}}(Σ, \mathfrak{P})$ implies that every permutation of $u_1, …, u_k,\allowbreak v_1, …, v_ℓ$ is in $\mathit{mD}_{\text{c}}(Σ, \mathfrak{P})$.
\end{observation}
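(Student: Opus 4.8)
The plan is to reduce the permutation claim to the single transposition of two adjacent Dyck words, since every permutation is a composition of adjacent transpositions, and $\equiv_{\Sigma,\mathfrak P}$ is an equivalence relation. So it suffices to show: if $w_1,\dots,w_m \in \mathit D(\Sigma)$ and $w_1\cdots w_m \in \mathit{mD}_{\text c}(\Sigma,\mathfrak P)$, then $w_1\cdots w_{i-1} w_{i+1} w_i w_{i+2}\cdots w_m \in \mathit{mD}_{\text c}(\Sigma,\mathfrak P)$ for every $i$. The key structural fact I would isolate first is that the cancellation rule for $\equiv_{\Sigma,\mathfrak P}$ is \emph{blind to the order} in which its $u_j$-pieces and $\sigma_j v_j \widebar{\sigma_j}$-pieces are concatenated, because its hypothesis only asks that $v_1\cdots v_\ell \equiv_{\Sigma,\mathfrak P}\varepsilon$ (which by the same reasoning is again permutation-stable, giving an induction on length) and that all the $u_j,v_j$ lie in $\mathit D(\Sigma)$ — both conditions are symmetric in the pieces.

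Concretely, I would argue as follows. Take a word $w = w_1\cdots w_m$ with each $w_t\in\mathit D(\Sigma)$ and $w\in\mathit{mD}_{\text c}(\Sigma,\mathfrak P)$. Since the empty word is obtainable from $w$ by $\equiv_{\Sigma,\mathfrak P}$, there is a reduction sequence; I would instead work directly from the inductive definition of the congruence and prove the statement by induction on the structure of a derivation of $w\equiv_{\Sigma,\mathfrak P}\varepsilon$. At each application of the cancellation rule we have some decomposition into blocks $u_0,\sigma_1 v_1\widebar{\sigma_1}, u_1,\dots,\sigma_\ell v_\ell\widebar{\sigma_\ell}, u_\ell$; swapping two adjacent $w_t$'s either (a) stays within a single block $u_j$ or a single $v_j$, in which case we recurse into that block (using that $\mathit D(\Sigma)$ is itself the Dyck language, hence permutation-closed under concatenation of Dyck factors — a standard fact about $\equiv_\Sigma$), or (b) straddles a block boundary, in which case we re-group the concatenation $u_{j-1}\cdots$ so that the two swapped factors land in adjacent blocks and invoke associativity/commutativity of concatenation at the level of whole blocks, which the rule permits since it never constrains the relative order of its $u$-blocks and parenthesised blocks. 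The only mildly delicate point is that a single $w_t$ might be split across a block boundary (part of it inside $u_j$, part inside $\sigma_{j+1}\cdots$); but here I would use that $w_t\in\mathit D(\Sigma)$ is balanced, so any prefix that ends strictly inside it is unbalanced, whereas $u_j$ and $\sigma_{j+1}v_{j+1}\widebar{\sigma_{j+1}}$ are each balanced — forcing the split to occur exactly at a $w_t$ boundary and ruling this case out.

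I expect the main obstacle to be bookkeeping rather than conceptual: making precise the claim ``the cancellation rule is insensitive to the order of its blocks'' and threading it through the nested induction (outer induction on the number of $w_t$'s, or on derivation length, and inner recursion into the $v_j$'s, whose number $\ell$ is strictly smaller, so the recursion is well-founded). Once that block-commutativity lemma is stated cleanly, the transposition case and hence the full permutation claim follow. Because the statement is flagged as ``easy to see,'' I would keep the write-up short: state the block-commutativity remark, reduce to adjacent transpositions, dispatch cases (a), (b), and the no-straddle argument, and conclude.
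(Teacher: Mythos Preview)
The paper offers no proof here; the observation is simply asserted as ``easy to see'' from the definition of $\equiv_{\Sigma,\mathfrak P}$. Your outline is substantially more detailed than anything in the paper and is broadly on the right track: reduce to an adjacent transposition and induct, using that the hypothesis $v_1\cdots v_\ell\equiv_{\Sigma,\mathfrak P}\varepsilon$ of the cancellation rule is itself permutation-stable for strictly shorter words.

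There is, however, a genuine gap in your ``no-straddle'' step. You claim that a proper prefix of $w_t$ is necessarily unbalanced, hence block boundaries must fall at $w_t$-boundaries. That is false unless $w_t$ is \emph{primitive}: e.g.\ with $w_t=()()$ the prefix $()$ is balanced, and a cancellation step may perfectly well place a block boundary in the middle of $w_t$. So block boundaries do not, in general, coincide with the $w_t$-boundaries, and your case split (a)/(b) as written does not cover the situation where a single $w_t$ properly contains one or more $\sigma_jv_j\widebar{\sigma_j}$-blocks together with surrounding $u$-material.

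The fix is to reverse the direction of the containment argument. What \emph{is} true is that each primitive block $\sigma_jv_j\widebar{\sigma_j}$ lies entirely inside a single $w_t$: its interior has positive bracket-height while every $w_t$-boundary is at height~$0$. Consequently, deleting all the $\sigma_jv_j\widebar{\sigma_j}$-blocks from $w_1\cdots w_m$ is the same as deleting them from each $w_t$ separately, yielding Dyck words $w_t'$ with $u_0\cdots u_\ell = w_1'\cdots w_m'$. After swapping $w_i$ and $w_{i+1}$ the same $\sigma_jv_j\widebar{\sigma_j}$-blocks occur (in a new order, which is harmless since $\{\sigma_1,\dots,\sigma_\ell\}$ is a set and the permuted product $v_{\pi(1)}\cdots v_{\pi(\ell)}\equiv_{\Sigma,\mathfrak P}\varepsilon$ by the inductive hypothesis on length), and the right-hand side becomes $w_1'\cdots w_{i+1}'w_i'\cdots w_m'$, to which you apply the inductive hypothesis again. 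With this correction your plan goes through; alternatively you could first refine every $w_t$ into its primitive factors via \textproc{split}, after which your original no-straddle claim becomes valid.
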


The \emph{dimension of $\mathit{mD}_{\text{c}}(Σ, \mathfrak{P})$} is $\max_{\mathfrak{p} ∈ \mathfrak{P}} \lvert \mathfrak{p} \rvert$.
The \emph{set of congruence multiple Dyck languages (of at most dimension $k$)} is denoted by $\cmDYCK$ ($\cmDYCK[k]$, respectively).

% We recall Proposition~10 from \citet{Den15}.
\begin{proposition}
% \hspace{-.5em}\footnote{The proof can be found in \citet[Proposition~10]{Den15}.}
\label{prop:congruence_mDyck}
	\enspace$\mDYCK[k] ⊆ \cmDYCK[k]$
\end{proposition}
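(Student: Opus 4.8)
The plan is to show that every multiple Dyck language $\mathit{mD}(Δ)$, as defined via the multiple Dyck grammar $G_Δ$, equals a congruence multiple Dyck language $\mathit{mD}_{\text{c}}(Σ, \mathfrak{P})$ for a suitable choice of $Σ$ and $\mathfrak{P}$, and to do so in a way that does not increase the dimension. The natural choice is to take $Σ = \{ δ^{[i]} ∣ δ ∈ Δ, i ∈ \upto{\sort(δ)} \}$ with $\widebar{Σ} = \{ \bar{δ}^{[i]} ∣ δ ∈ Δ, i ∈ \upto{\sort(δ)} \}$ (so that $\theBar$ on $Σ$ is inherited from $\theBar$ on $\widehat{Δ}$), and $\mathfrak{P} = \big\{ \{ δ^{[1]}, …, δ^{[\sort(δ)]} \} ∣ δ ∈ Δ \big\}$. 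Then $\max_{\mathfrak{p} ∈ \mathfrak{P}} \lvert\mathfrak{p}\rvert = \max_{δ ∈ Δ} \sort(δ) = k$, so the dimension is preserved and the inclusion respects the index $k$, which is what the statement asks.

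With this setup, the core of the argument is the set equality $\mathit{mD}(Δ) = \mathit{mD}_{\text{c}}(Σ, \mathfrak{P})$. I would prove the two inclusions separately. For $\mathit{mD}(Δ) ⊆ \mathit{mD}_{\text{c}}(Σ, \mathfrak{P})$, I would argue by induction on the structure of a derivation $d ∈ D_{G_Δ}$, proving the stronger statement that if $A_s → f(A_{s_1}, …, A_{s_ℓ})$-derivations produce a tuple $(w_1, …, w_s)$, then each $w_i ∈ \mathit{D}(Σ)$ and moreover $w_1 \cdots w_s ∈ \mathit{mD}_{\text{c}}(Σ, \mathfrak{P})$ (this second claim is needed to feed the cancellation rule of $≡_{Σ,\mathfrak{P}}$, which requires its arguments $v_1 \cdots v_ℓ$ to reduce to $ε$). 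The three production types of $G_Δ$ correspond exactly to: (i) linear non-deleting terminal-free recombination, which preserves membership in $\mathit{mD}_{\text{c}}$ via \cref{obs:mDyck_permutation_rule} together with the observation that concatenation and componentwise reshuffling of Dyck words that jointly reduce still jointly reduce; (ii) wrapping by a matched family $δ^{[i]} \cdot \bar{δ}^{[i]}$, which is precisely an instance of the cancellation rule for $\{δ^{[1]}, …, δ^{[\sort(δ)]}\} ∈ \mathfrak{P}$; and (iii) insertion of $δ^{[1]}\bar{δ}^{[1]}$ for $δ ∈ Δ_1$, which is a single-parenthesis cancellation already available in $≡_Σ$ (hence in $≡_{Σ,\mathfrak{P}}$, since singleton blocks behave like ordinary Dyck cancellation). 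For the reverse inclusion $\mathit{mD}_{\text{c}}(Σ, \mathfrak{P}) ⊆ \mathit{mD}(Δ)$, I would proceed by induction on the number of cancellation-rule applications witnessing $w ≡_{Σ,\mathfrak{P}} ε$: the base case $w = ε$ is generated by production type (iii) (or the $ℓ=0$ instance of (i)), and each cancellation step, peeling off a block $\{σ_1, …, σ_ℓ\} ∈ \mathfrak{P}$ together with interleaved Dyck factors $u_0, …, u_ℓ$, can be simulated by first deriving the smaller words (shorter witnesses, so the induction hypothesis applies) and then applying production type (ii) once, followed by type-(i) recombination to place the $u_j$'s in the right positions.

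The main obstacle, I expect, is the bookkeeping needed to make the induction hypotheses line up — specifically, the tuple-versus-word mismatch. A multiple Dyck grammar derivation from $A_s$ produces an $s$-tuple of words, but the congruence $≡_{Σ,\mathfrak{P}}$ lives on single strings, so one must carry along both "each component is in $\mathit{D}(Σ)$" and "the concatenation lies in $\mathit{mD}_{\text{c}}$", and verify these are stable under the linear non-deleting terminal-free string functions $f$ (which may permute, concatenate, and redistribute components across the output tuple). Getting the cancellation rule's side condition — that the inserted factors $v_1 \cdots v_ℓ$ reduce to $ε$ — to match up with what the grammar's recursion actually guarantees is the delicate point; here \cref{obs:mDyck_permutation_rule} does most of the work, but one still has to check that splitting a single Dyck component $w_i$ at the positions dictated by a type-(ii) production yields factors that are themselves in $\mathit{D}(Σ)$ and jointly reduce, which is where a careful parenthesis-matching argument on $\mathit{D}(Σ)$ is required. (Note this proposition only claims the inclusion $\mDYCK[k] ⊆ \cmDYCK[k]$; I would not attempt the converse here, as the two definitions need not coincide in general — the congruence definition is a priori more permissive about which parenthesis types may be linked.)
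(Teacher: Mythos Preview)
Your proposal is correct and follows essentially the same approach as the paper: the same choice of $Σ$ and $\mathfrak{P}$, the same strengthened induction hypothesis (each tuple component lies in $\mathit{D}(Σ)$ and their concatenation lies in $\mathit{mD}_{\text{c}}(Σ,\mathfrak{P})$), structural induction on derivations of $G_Δ$ for one direction with a case split over the three rule types, and induction on the number of cancellation-rule applications for the other. The bookkeeping you flag as the main obstacle is handled in the paper exactly as you anticipate, via \cref{obs:mDyck_permutation_rule}, and is not more delicate than you describe.
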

\begin{proof}
	We show that a tuple $(w_1, …, w_m)$ can be generated in $G_Δ$ from non-terminal $A_m$ if and only if $w_1, …, w_m$ are all Dyck words and $w_1⋯w_m$ is a multiple Dyck word.
	The “only if” we prove by induction on the structure of derivations in $G_Δ$.
	For “if” we construct derivations in $G_Δ$ by induction on the number of applications of the cancellation rule (including the number of applications to reduce the word $v_1 \cdots v_ℓ$ from the definition on the cancellation rule to~$ε$).

% 	$(⊆)$
	Let $\mathit{mD} ∈ \mDYCK[k]$.
	Then there is an $ℕ$-sorted set $Δ$ such that $\mathit{mD} = \mathit{mD}(Δ)$ and $k ≥ \max_{δ ∈ Δ} \sort(δ)$.
	We define
		$\mathfrak{p}_δ = \{δ^{[i]} ∣ i ∈ \upto{\sort(δ)}\}$ for every $δ ∈ Δ$,
		$Σ = ⋃_{δ ∈ Δ} \mathfrak{p}_δ$, and
		$\mathfrak{P} = \{ \mathfrak{p}_δ ∣ δ ∈ Δ\}$.
	Clearly $\max_{\mathfrak{p} ∈ \mathfrak{P}} \lvert \mathfrak{p} \rvert ≤ k$.
	Thus $\mathit{mD}_{\text{c}}(Σ, \mathfrak{P}) ∈ \mDYCK[k]$.
	Let $\operatorname{Tup}(G_Δ, A)$ denote the set of tuples generated in $G_Δ$ when starting with non-terminal $A$ where $A$ is not necessarily initial.
	In the following we show that
	for every $m ∈ [\max_{δ ∈ Δ} \sort(δ)]$ and $w_1, …, w_m ∈ (Σ ∪ \bar{Σ})^*: $
	\begin{align*}
	(w_1, …, w_m) ∈ \operatorname{Tup}(G_Δ, A_m)
	\iff w_1 \cdots w_m ∈ \mathit{mD}_{\text{c}}(Σ, \mathfrak{P}) ∧ w_1, …, w_m ∈ \mathit{D}(Σ) \tag{*}\label{eq:theIH}
	\end{align*}
	\proofpart{($⇒$)}
	It follows from the definitions of $\operatorname{Tup}$ and $G_Δ$ that
		$(w_1, …, w_m) ∈ \operatorname{Tup}(G_Δ, A_m)$
	implies that
		there are
			a rule $A_m → f(A_{m_1}, …, A_{m_ℓ})$ in $G_Δ$ and
			a tuple $\vec{u}_i = (u_i^1, …, u_i^{m_i})$ in $\operatorname{Tup}(G_Δ, A_{m_i})$ for every $i ∈ [ℓ]$
		such that $f(\vec{u}_1, …, \vec{u}_ℓ) = (w_1, …, w_m)$.
	By applying the induction hypothesis $ℓ$ times, we also have that
		$u_1^1, …, u_1^{m_1}, …, u_ℓ^1, …, u_ℓ^{m_ℓ} ∈ \mathit{D}(Σ)$ and
		$u_1^1 \cdots u_1^{m_1}, …, u_ℓ^1 \cdots u_ℓ^{m_ℓ} ∈ \mathit{mD}(Σ, \mathfrak{P})$.
	We distinguish three cases (each corresponding to one type of rule in $G_Δ$):
	\begin{enumerate}
		\item
			$f$ is linear, non-deleting, and terminal-free.
				Then we have for every $i ∈ [m]$ that $w_i ∈ \{u_1^1, …, u_1^{m_1}, …, u_ℓ^1, …, u_ℓ^{m_ℓ}\}^*$ and therefore also $w_i ∈ \mathit{D}(Σ)$.
				Furthermore, by applying \cref{obs:mDyck_permutation_rule} $(ℓ-1)$ times, we have that $w_1 \cdots w_m ∈ \mathit{mD}_{\text{c}}(Σ, \mathfrak{P})$.
		\item
			$f = [δ^{[1]}x_1^1\bar{δ}^{[1]}, …, δ^{[m]}x_1^m\bar{δ}^{[m]}]$; then
				$ℓ = 1$,
				$m_1 = m$, and
				for every $i ∈ [m]$ we have $w_i = δ^{[i]} u_1^i \bar{δ}^{[i]}$ and since $u_1^i ∈ \mathit{D}(Σ)$ also $w_i ∈ \mathit{D}(Σ)$.
				Furthermore, $w_1 \cdots w_m = δ^{[1]} u_1^1 \bar{δ}^{[1]} \cdots δ^{[m]} u_1^m \bar{δ}^{[m]} ∈ \mathit{mD}_{\text{c}}(Σ, \mathfrak{P})$ due to the cancellation rule.
		\item
			$f = [u_1, …, u_m]$ where $u_i ∈ \big\{ x^1_i,\; x^1_i δ^{[1]} \bar{δ}^{[1]},\; δ^{[1]} \bar{δ}^{[1]} x^1_i ∣ δ ∈ Δ_1\big\}$ for every $i ∈ \upto{m}$; then
				$w_i ∈ \big\{ u^1_i,\; u^1_i δ^{[1]} \bar{δ}^{[1]},\; δ^{[1]} \bar{δ}^{[1]} u^1_i ∣ δ ∈ Δ_1 \big\}$ for every $i ∈ \upto{m}$,
				$ℓ = 1$, and
				$m_1 = m$.
				Since $≡_Σ$ is a congruence relation (in particular, $≡_Σ$ respects composition), we have that $w_1, …, w_m ∈ \mathit{D}(Σ)$.
				By applying \cref{obs:mDyck_permutation_rule} $m$ times, we have that $w_1 \cdots w_m ∈ \mathit{mD}_{\text{c}}(Σ, \mathfrak{P})$.
	\end{enumerate}
	\proofpart{($⇐$)}
	If the cancellation rule is applied zero times in order to reduce $w_1 \cdots w_m$ to $ε$ then $w_1 = … = w_m = ε$.
	The rule $A_m → [ε, …, ε]()$ in $G_D$ clearly derives $(w_1, …, w_m)$.
	If the cancellation rule is applied $i + 1$ times in order to reduce $w_1 \cdots w_m$ to $ε$ then $w_1 \cdots w_m$ has the form $u_0σ_1v_1\widebar{σ_1}u_1 \cdots σ_ℓ v_ℓ \widebar{σ_ℓ} u_ℓ$ for some $u_0, …, u_ℓ ∈ \mathit{D}(Σ)$, $v_1, …, v_ℓ ∈ \mathit{D}(Σ)$, and $\{σ_1, …, σ_ℓ\} ∈ \mathfrak{P}$ with $v_1 \cdots v_ℓ ≡_{Σ, \mathfrak{P}} ε$.
	Then we need to apply the cancellation rule at most $i$ times to reduce $v_1 \cdots v_ℓ$ to $ε$, hence, by induction hypothesis, there is some $d ∈ D_{G_Δ}$ that derives $(v_1, …, v_ℓ)$.
	We use an appropriate rule $ρ$ of type (ii) such that $ρ(d)$ derives $(σ_1v_1\widebar{σ_1}, …, σ_ℓ v_ℓ \widebar{σ_ℓ})$.
	Also, we need to apply the cancellation rule at most $i$ times in order to reduce $u_0 \cdots u_ℓ$ to $ε$, hence, by induction hypothesis, there are derivations $d_1, …, d_n$ that derive tuples containing exactly $u_0, …, u_ℓ$ as components.
	Then there is a rule $ρ'$ of type \ref{def:Dyck-MCFG:3} such that $ρ'(ρ(d), d_1, …, d_n) ∈ D_{G_Δ}$ derives the tuple $(w_1, …, w_m)$.
	
	From \eqref{eq:theIH} with $m = 1$ and the fact “$w_1 ∈ \mathit{mD}_{\text{c}}(Σ, \mathfrak{P})$ implies $w_1 ∈ \mathit{D}(Σ)$” we get that $\mathit{mD}_{\text{c}}(Σ, \mathfrak{P}) = \mathit{mD}$.
% 	
% 	$(≠)$
% 	Consider the language $\mathit{D}(\{[\})$.
% 	Clearly $\mathit{D}(\{[\}) ∈ \cmDYCK$.
% 	$\mathit{D}(\{[\}) ∉ \mDYCK$ since parentheses in words from $\mathit{D}(\{[\})$ have no superscript, but parentheses in words of every multiple Dyck language do.
\end{proof}

\begin{lemma}\label{thm:multipleDyckIsMCFL}
	\enspace$\cmDYCK[k] ⊆ \MCFL[k]$
\end{lemma}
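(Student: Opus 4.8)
I want to show that every congruence multiple Dyck language $\mathit{mD}_{\text{c}}(Σ, \mathfrak{P})$ of dimension at most $k$ is a $k$-MCFL. The natural route is to build a $k$-MCFG directly from the data $(Σ, \mathfrak{P})$ whose language is exactly $\mathit{mD}_{\text{c}}(Σ, \mathfrak{P})$. The key structural insight, already visible in the proof of \cref{prop:congruence_mDyck}, is that membership in $\mathit{mD}_{\text{c}}(Σ, \mathfrak{P})$ can be witnessed by a recursive decomposition: a word $w_1 \cdots w_m$ reduces to $ε$ precisely when it can be split, via one application of the cancellation rule, into an "inner" tuple $(v_1, \dots, v_ℓ)$ that is itself a multiple Dyck word of dimension $ℓ$ and a collection of context words $u_0, \dots, u_ℓ$ that, together, again reduce to $ε$. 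A grammar should therefore maintain, as the semantics of a non-terminal of sort $m$, the set of tuples $(w_1, \dots, w_m)$ of Dyck words whose concatenation lies in $\mathit{mD}_{\text{c}}(Σ, \mathfrak{P})$ — exactly the set $\operatorname{Tup}$-style invariant \eqref{eq:theIH}, but now for the congruence presentation rather than for $G_Δ$.

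**Concrete construction.** I would take $N = \{A_1, \dots, A_k\}$ with $\sort(A_i) = i$, over the terminal alphabet $Σ \cup \widebar{Σ}$, with initial non-terminal $A_1$, and productions designed to mirror the three ways a multiple Dyck word can arise: (i) for every linear non-deleting terminal-free string function $f$ of sort $(s_1 \cdots s_ℓ, s)$ with $ℓ$ bounded appropriately and $s_1, \dots, s_ℓ, s \in [k]$, the production $A_s → f(A_{s_1}, \dots, A_{s_ℓ})$, which realises permutation/shuffling of already-balanced pieces (justified by \cref{obs:mDyck_permutation_rule}); (ii) for every $\mathfrak{p} = \{σ_1, \dots, σ_ℓ\} \in \mathfrak{P}$ with $ℓ \le k$, a production of sort $(ℓ, ℓ)$ that wraps the $i$-th component with $σ_i \cdot \widebar{σ_i}$, i.e. $A_ℓ → [σ_1 x_1^1 \widebar{σ_1}, \dots, σ_ℓ x_1^ℓ \widebar{σ_ℓ}](A_ℓ)$ — this is the single use of the cancellation rule; and (iii) nullary productions $A_m → [ε, \dots, ε]()$ for the base case. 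In fact, once I allow the type-(i) functions to insert matched single-parenthesis pairs $σ\widebar{σ}$ as "decoration" (as in clause (iii) of the definition of the multiple Dyck grammar), I essentially recover an $\widehat{Δ}$-style grammar. The cleanest path is actually to observe that $\mathit{mD}_{\text{c}}(Σ, \mathfrak{P})$, with the partition $\mathfrak{P}$, is literally $\mathit{mD}(Δ_{\mathfrak{P}})$ for the $ℕ$-sorted set $Δ_{\mathfrak{P}}$ that has one element $δ_{\mathfrak{p}}$ of sort $|\mathfrak{p}|$ for each block $\mathfrak{p} \in \mathfrak{P}$ — i.e., that the containment in \cref{prop:congruence_mDyck} is in fact an equality under this correspondence — and then invoke that $\mDYCK[k] ⊆ \MCFL[k]$ because $G_{Δ_{\mathfrak{P}}}$ is by construction a $k$-MCFG.

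**The argument and the obstacle.** Either way, the heart of the proof is a correctness statement of the same shape as \eqref{eq:theIH}: for every $m \in [k]$ and $w_1, \dots, w_m \in (Σ \cup \widebar{Σ})^*$, the tuple $(w_1, \dots, w_m)$ is derivable from $A_m$ in the constructed grammar if and only if $w_1, \dots, w_m \in \mathit{D}(Σ)$ and $w_1 \cdots w_m \in \mathit{mD}_{\text{c}}(Σ, \mathfrak{P})$. The "$⇒$" direction goes by induction on derivation structure, splitting into the three production types exactly as in the proof of \cref{prop:congruence_mDyck}. The "$⇐$" direction goes by induction on the number of applications of the cancellation rule needed to reduce $w_1 \cdots w_m$ to $ε$ (counting the applications used to reduce the nested word $v_1 \cdots v_ℓ$), again reusing the case analysis there. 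I expect the main obstacle to be bookkeeping in the "$⇐$" direction: from $w_1 \cdots w_m \in \mathit{mD}_{\text{c}}(Σ, \mathfrak{P})$ I only know the existence of \emph{some} factorisation $u_0 σ_1 v_1 \widebar{σ_1} u_1 \cdots σ_ℓ v_ℓ \widebar{σ_ℓ} u_ℓ$, but this factorisation may cut across the boundaries between the components $w_1, \dots, w_m$, so I must argue that the matched parentheses $σ_i, \widebar{σ_i}$ and the Dyck factors $u_j, v_j$ can be re-grouped into whole components and that a single linear non-deleting terminal-free function of fan-out $\le k$ (plus the wrapping production and the sub-derivations for the pieces) suffices to reassemble the original tuple. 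Since every $w_i$ is itself a Dyck word, each $w_i$ decomposes uniquely into maximal matched blocks, and this observation is what lets me align the abstract factorisation with the component structure; making that alignment precise — i.e. showing the $u_j$'s and $v_j$'s each lie within a single component and that the number of pieces stays bounded by $k$ times the (fixed, bounded) rank — is the technical crux. Finally, instantiating the correctness statement at $m = 1$ and using "$w_1 \in \mathit{mD}_{\text{c}}(Σ, \mathfrak{P})$ implies $w_1 \in \mathit{D}(Σ)$" yields $L(G) = \mathit{mD}_{\text{c}}(Σ, \mathfrak{P})$, and the fan-out bound is immediate from the construction.
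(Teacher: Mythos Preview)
Your proposal is correct and follows essentially the same approach as the paper. In particular, your ``cleanest path'' --- viewing $\mathfrak{P}$ as an $ℕ$-sorted set $Δ_{\mathfrak{P}}$, building the multiple Dyck grammar $G_{Δ_{\mathfrak{P}}}$, and identifying $\mathit{mD}_{\text{c}}(Σ, \mathfrak{P})$ with its language up to a relabelling bijection $g\colon \widehat{Δ} \to Σ \cup \widebar{Σ}$ --- is exactly what the paper does; the paper then finishes by invoking closure of $\MCFL[k]$ under homomorphisms to absorb $g$, whereas your first variant bakes $g$ into the grammar and thereby skips that closure step. The correctness invariant you state and its two inductive directions match the paper's argument (which phrases the invariant as a $\operatorname{Tup}_g$-style claim quantified over Dyck decompositions), and the alignment issue you flag is indeed resolved, as you note, by the unique minimal-block decomposition of Dyck words.
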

\begin{proofidea}
	For a given congruence multiple Dyck language $L$, we construct a multiple Dyck grammar that is equivalent up to a homomorphism~$g$.
	We then use the closure of $\MCFL[k]$ under homomorphisms.
\end{proofidea}
\begin{proof}
	Let $L ∈ \cmDYCK[k]$.
	Then there are an alphabet $Σ$ and a partition $\mathfrak{P}$ of $Σ$ such that $\mathit{mD}_{\text{c}}(Σ, \mathfrak{P}) = L$.
	Consider $\mathfrak{P}$ as an $ℕ$-sorted set where the sort of an element is its cardinality.
	Then $\widehat{Δ} = \{\mathfrak{p}^{[i]}, \bar{\mathfrak{p}}^{[i]} ∣ \mathfrak{p} ∈ \mathfrak{P},  i ∈ \upto{\lvert\mathfrak{p}\rvert} \}$.
	For every $\mathfrak{p} ∈ \mathfrak{P}$ assume some fixed enumeration of the elements of $\mathfrak{p}$.
	We define a bijection $g: \widehat{Δ} → Σ ∪ \widebar{Σ}$ such that every $\mathfrak{p}^{[i]}$ (for some $\mathfrak{p}$ and $i$) is assigned the $i$-th element of $\mathfrak{p}$ and $g(\bar{\mathfrak{p}}^{[i]}) = \widebar{g(\mathfrak{p}^{[i]})}$.
	
	Let $\operatorname{Tup}_g(G_{\mathfrak{P}})$ denote the set of tuples obtained by interpreting the terms corresponding to every subderivation in $G_{\mathfrak{P}}$ and then applying $g$ to every component.
	We show the following claim by induction:
	\begin{align*}
		w ∈ \mathit{mD}(\mathfrak{P})
		&\iff
		∀
			ℓ ∈ ℕ,
			u_0, …, u_ℓ,w_1, …, w_ℓ ∈ \mathit{D}(Σ) \\*
		&\qquad\quad\text{ with }
			w = u_0w_1u_1 \cdots w_ℓ u_ℓ
		: \\*
			&\qquad\qquad(u_0, w_1, u_1, \cdots, w_ℓ, u_ℓ) ∈ \operatorname{Tup}_g(G_{\mathfrak{P}})
		\tag{IH}\label{cl:multipleDyckIsMCFL:IH}
	\end{align*}
	Note that in the following the indices of the elements of $\mathfrak{p} = \{σ_1, …, σ_ℓ\}$ are chosen such that they respect the previously fixed enumeration of $\mathfrak{p}$, i.e. for every $i ∈ \upto{ℓ}: g(\mathfrak{p}^{[i]}) = σ_i$. We derive
	\begin{align*}
		&w ∈ \mathit{mD}(\mathfrak{P}) \\*[.5em]
		&\iff
			∀
			ℓ ∈ ℕ,
			u_0, …, u_ℓ, v_1, …, v_ℓ ∈ \mathit{D}(Σ),
			\mathfrak{p} = \{σ_1, …, σ_ℓ\} ∈ \mathfrak{P}\text{ with } \\*
		&\qquad\quad
			w = u_0 σ_1 v_1 \widebar{σ_1} u_1 \cdots σ_ℓ v_ℓ \widebar{σ_ℓ} u_ℓ
		: 
			\enspace u_0 v_1 u_1 \cdots v_ℓ u_ℓ ∈ \mathit{mD}(\mathfrak{P})
		\tag{by def. of $\mathit{mD}(\mathfrak{P})$} \\[.5em]
		&\iff
			∀
			ℓ ∈ ℕ,
			u_0, …, u_ℓ, v_1, …, v_ℓ ∈ \mathit{D}(Σ),
			\mathfrak{p} = \{σ_1, …, σ_ℓ\} ∈ \mathfrak{P}\text{ with } \\*
		&\qquad\quad
			w = u_0 σ_1 v_1 \widebar{σ_1} u_1 \cdots σ_ℓ v_ℓ \widebar{σ_ℓ} u_ℓ
		: 
			(u_0, v_1, u_1, …, v_ℓ, u_ℓ) ∈ \operatorname{Tup}_g(G_{\mathfrak{P}})
		\tag{by \eqref{cl:multipleDyckIsMCFL:IH}} \\[.5em]
		&\iff
			∀
			ℓ ∈ ℕ,
			u_0, …, u_ℓ, v_1, …, v_ℓ ∈ \mathit{D}(Σ),
			\mathfrak{p} = \{σ_1, …, σ_ℓ\} ∈ \mathfrak{P} \text{ with } \\*
			&\qquad\quad
			w = u_0 σ_1 v_1 \widebar{σ_1} u_1 \cdots σ_ℓ v_ℓ \widebar{σ_ℓ} u_ℓ
		: \\*
			&\qquad\qquad (u_0, σ_1 v_1 \widebar{σ_1}, u_1, …, σ_ℓ v_ℓ \widebar{σ_ℓ}, u_ℓ) ∈ \operatorname{Tup}_g(G_{\mathfrak{P}})
		\tag{by def. of $G_{\mathfrak{P}}$} \\*[.5em]
		&\iff
			∀
				ℓ ∈ ℕ,
				u_0, …, u_ℓ, w_1, …, w_ℓ ∈ \mathit{D}(Σ) \text{ with }
				w = u_0 w_1 u_1 \cdots w_ℓ u_ℓ
		: \\*
			&\qquad\qquad (u_0, w_1, u_0, …, w_ℓ, u_ℓ) ∈ \operatorname{Tup}_g(G_{\mathfrak{P}})
		\tag{using permuting productions in $G_{\mathfrak{P}}$}
	\end{align*}
	$g(L(G_{\mathfrak{P}})) = L$ follows by instantiating \eqref{cl:multipleDyckIsMCFL:IH} for $ℓ = 0$ and discovering that $\{ t ∣ (t) ∈ \operatorname{Tup}_g(G_{\mathfrak{P}})\} = g(L(G_{\mathfrak{P}}))$.
	
	Since $k$-MCFLs are closed under homomorphisms \cite[Theorem~3.9]{SekMatFujKas91}, we know that $L ∈ \MCFL[k]$.%
	\footnote{This construction shows that congruence multiple Dyck languages (\cref{def:mDYCK}) are equivalent to multiple Dyck languages \cite[Definition~1]{YosKajSek10} up to the application of $g$.}
\end{proof}
% \begin{proof} % clean proof without [YosKajSek10]
% 	Consider the MCFG $G = \big( Σ ∪ b(Σ), \{ A_1, …, A_m \}, \{ A_1 \}, P \big)$ where $Σ = ⋃_{\mathfrak{p} ∈ \mathfrak{P}} \mathfrak{p}$, $m = \max\{\lvert\mathfrak{p}\rvert ∣ \mathfrak{p} ∈ \mathfrak{P}\}$, and $P$ is the smallest set such that
% 	\begin{enumerate}
% 		\item $A_1 → [ε] ∈ P$,
% 		\item for every $s_1, …, s_k, s ∈ [m]$ and $f ∈ F^{\text{lt}}$ with $\sort(f) = (s_1 \cdots s_k, s): A_s → f(A_{s_1}, …, A_{s_k}) ∈ P$, and
% 		\item for every $s ∈ [m]$, $k ∈ [s]$, and ${⟨_1}, …, {⟨_k} ∈ Σ$ such that $\{{⟨_1}, …, {⟨_k}\} ∈ \mathfrak{P}$ and for every $i ∈ [k]: b({⟨_i}) = {⟩_i}$ we have $A_s → f(A_s) ∈ P$ where for every $w_1, \dots, w_s ∈ \big(Δ ∪ Σ ∪ b(Σ)\big)^*: f\big( (w_1, \dots, w_s) \big) = \big( ⟨_1w_1⟩_1, …, ⟨_kw_k⟩_k, w_{k+1}, …, w_s \big)$.
% 	\end{enumerate}
% 	It remains to show $L(G) = \mathit{mD}(\mathfrak{P}, b)$;
% 	we can show “$⊆$” by induction on the structure of the derivations in $G$ and “$⊇$” by constructing for every word in $\mathit{mD}(Σ, \mathfrak{P})$ an appropriate derivation in $G$.
% \end{proof}

\begin{observation}\label{obs:(k+1)-cmDYCK_not_k-MCFL}
Examining the definition of multiple Dyck grammars, we observe that some production in \cref{def:Dyck-MCFG:2} has fan-out $k$ for at least one $δ ∈ Δ$.
Then, using Seki, Matsumura, Fujii, and Kasami~\cite[Theorem~3.4]{SekMatFujKas91}, we have for every $k ≥ 1$ that $\cmDYCK[(k+1)] ∖ \MCFL[k] ≠ ∅$.
\end{observation}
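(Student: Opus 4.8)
The plan is to produce, for each $k ≥ 1$, a single witness language lying in $\cmDYCK[(k+1)]$ but not in $\MCFL[k]$. The witness will be a multiple Dyck language over a sorted set consisting of one symbol of sort $k+1$, and its non-membership in $\MCFL[k]$ will be inherited — via intersection with a suitable regular language — from the fact that the ``equal blocks'' language over $2(k+1)$ letters is not a $k$-MCFL.

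Concretely, fix $k ≥ 1$, let $Δ = \{δ\}$ with $\sort(δ) = k+1$, and let $G_Δ = \big(\{A_1, …, A_{k+1}\}, \widehat{Δ}, A_1, P\big)$ be the multiple Dyck grammar with respect to $Δ$. Since $\mathit{mD}(Δ)$ has dimension $k+1$, it lies in $\mDYCK[(k+1)] ⊆ \cmDYCK[(k+1)]$ by \cref{prop:congruence_mDyck} (in fact the construction in that proof even shows $\mathit{mD}(Δ) = \mathit{mD}_{\text{c}}(Σ, \mathfrak{P})$ for $Σ = \{δ^{[1]}, …, δ^{[k+1]}\}$ and the one-block partition $\mathfrak{P} = \{Σ\}$, which has dimension $k+1$). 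Set $R = (δ^{[1]})^* (\bar{δ}^{[1]})^* ⋯ (δ^{[k+1]})^* (\bar{δ}^{[k+1]})^*$, a regular language, and $L_{k+1} = \{\, (δ^{[1]})^n (\bar{δ}^{[1]})^n ⋯ (δ^{[k+1]})^n (\bar{δ}^{[k+1]})^n ∣ n ∈ ℕ \,\}$. The core of the argument is to prove $\mathit{mD}(Δ) ∩ R = L_{k+1}$.

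For the inclusion $L_{k+1} ⊆ \mathit{mD}(Δ)$ I give an explicit derivation in $G_Δ$: starting from the base production $A_{k+1} → [ε, …, ε]()$, apply the production of type (ii), namely $A_{k+1} → [δ^{[1]}x_1^1\bar{δ}^{[1]}, …, δ^{[k+1]}x_1^{k+1}\bar{δ}^{[k+1]}](A_{k+1})$, exactly $n$ times to obtain the tuple $\big((δ^{[i]})^n(\bar{δ}^{[i]})^n\big)_{i ∈ \upto{k+1}}$, then project with the type-(i) production $A_1 → [x_1^1 ⋯ x_1^{k+1}](A_{k+1})$. For the reverse inclusion I prove, by induction on subderivations of $G_Δ$, that in the yield tuple $\vec u$ of any subderivation all of the counts $\#_{δ^{[i]}}(\vec u)$ and $\#_{\bar{δ}^{[i]}}(\vec u)$ (occurrences across all components, for $i ∈ \upto{k+1}$) are equal. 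The induction goes through because the only terminal-producing productions are those of type (ii), each of which adds exactly one $δ^{[i]}$ and one $\bar{δ}^{[i]}$ to every one of the $k+1$ components; productions of type (i) are terminal-free, and those of type (iii) are vacuous here since $Δ_1 = ∅$. As any word of $R$ has the form $(δ^{[1]})^{p_1}(\bar{δ}^{[1]})^{q_1} ⋯ (δ^{[k+1]})^{p_{k+1}}(\bar{δ}^{[k+1]})^{q_{k+1}}$, the invariant forces $p_1 = q_1 = ⋯ = p_{k+1} = q_{k+1}$, giving $\mathit{mD}(Δ) ∩ R ⊆ L_{k+1}$.

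To conclude, note that $L_{k+1}$ is, up to renaming its $2(k+1)$ distinct symbols, the language $\{\, a_1^n a_2^n ⋯ a_{2(k+1)}^n ∣ n ∈ ℕ \,\}$, which by Seki, Matsumura, Fujii, and Kasami~\cite[Theorem~3.4]{SekMatFujKas91} is a $(k+1)$-MCFL but not a $k$-MCFL; the specific names of the $2(k+1)$ parentheses are immaterial here, as $\MCFL[k]$ is closed under homomorphisms. Since $\MCFL[k]$ is closed under intersection with regular languages, $\mathit{mD}(Δ) ∈ \MCFL[k]$ would imply $L_{k+1} = \mathit{mD}(Δ) ∩ R ∈ \MCFL[k]$, a contradiction. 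Hence $\mathit{mD}(Δ) ∈ \cmDYCK[(k+1)] ∖ \MCFL[k]$, as claimed. I expect the main obstacle to be the induction of the previous paragraph — establishing that the intersection with $R$ is exactly $L_{k+1}$ — since the rest is either an explicit construction or a direct appeal to known closure properties and the fan-out hierarchy of multiple context-free languages.
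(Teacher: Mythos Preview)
Your proof is correct and follows the same idea the paper gestures at: pick the multiple Dyck language over a single symbol of sort $k+1$ and invoke \cite[Theorem~3.4]{SekMatFujKas91}. The paper's observation is only a one-line sketch, and your argument supplies precisely the missing step it omits — namely, intersecting $\mathit{mD}(Δ)$ with the regular language $R$ to extract the hierarchy witness $\{a_1^n\cdots a_{2(k+1)}^n ∣ n ∈ ℕ\}$ and then using closure of $\MCFL[k]$ under regular intersection — together with the counting invariant that makes the intersection equal to $L_{k+1}$.
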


\begin{proposition}\label{prop:mDyckHierarchy}
	\enspace\( \DYCK = \cmDYCK[1] ⊊ \cmDYCK[2] ⊊ \dots \)
\end{proposition}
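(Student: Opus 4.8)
The statement to prove is the chain $\DYCK = \cmDYCK[1] ⊊ \cmDYCK[2] ⊊ \dots$. I would split this into three parts: the equality at the bottom, the inclusions $\cmDYCK[k] ⊆ \cmDYCK[(k+1)]$, and the strictness of each inclusion.

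For the equality $\DYCK = \cmDYCK[1]$: a congruence multiple Dyck language of dimension $1$ arises from a partition $\mathfrak{P}$ of $Σ$ all of whose blocks are singletons. In that case the cancellation rule for $≡_{Σ,\mathfrak{P}}$ specialises — with $ℓ = 1$ and the single letter $σ_1$ — to $u_0 σ_1 v_1 \widebar{σ_1} u_1 ≡_{Σ,\mathfrak{P}} u_0 u_1$ whenever $v_1 ∈ \mathit{D}(Σ)$ and $u_0, u_1 ∈ \mathit{D}(Σ)$. One checks that this generates exactly the same congruence as the plain Dyck cancellation $σ\widebar{σ} ≡_Σ ε$: the plain rule is the instance $u_0 = v_1 = u_1 = ε$, and conversely the $\mathfrak{P}$-rule is derivable from the plain one since $σ_1 v_1 \widebar{σ_1}$ already reduces to $ε$ in $≡_Σ$ when $v_1 ∈ \mathit{D}(Σ)$, and $≡_Σ$ is a congruence. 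Hence $≡_{Σ,\mathfrak{P}} \,=\, ≡_Σ$ and so $\mathit{mD}_{\text{c}}(Σ,\mathfrak{P}) = \mathit{D}(Σ)$; the reverse containment $\DYCK ⊆ \cmDYCK[1]$ is the same observation read backwards. (One should note a trivial degenerate case, $Σ = ∅$, which is harmless.)

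For the inclusions $\cmDYCK[k] ⊆ \cmDYCK[(k+1)]$: given $\mathit{mD}_{\text{c}}(Σ,\mathfrak{P})$ with all blocks of size $≤ k$, I would simply regard it as a congruence multiple Dyck language of dimension $≤ k+1$ over the same $Σ$ and the same $\mathfrak{P}$ — a partition with blocks of size $≤ k$ is a fortiori one with blocks of size $≤ k+1$. So the inclusion is immediate from the definition of $\cmDYCK[k]$ as the set of $\mathit{mD}_{\text{c}}(Σ,\mathfrak{P})$ with $\max_{\mathfrak{p} ∈ \mathfrak{P}} \lvert\mathfrak{p}\rvert ≤ k$.

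The real content is strictness, and for this I would invoke the machinery already assembled. By \cref{thm:multipleDyckIsMCFL} we have $\cmDYCK[k] ⊆ \MCFL[k]$. On the other hand, \cref{obs:(k+1)-cmDYCK_not_k-MCFL} gives, for every $k ≥ 1$, a language in $\cmDYCK[(k+1)] ∖ \MCFL[k]$. Combining: such a language lies in $\cmDYCK[(k+1)]$ but not in $\MCFL[k] ⊇ \cmDYCK[k]$, so $\cmDYCK[k] ⊊ \cmDYCK[(k+1)]$ for all $k ≥ 1$. The inclusion $\cmDYCK[1] ⊊ \cmDYCK[2]$ is the case $k = 1$ of this, using also the equality $\DYCK = \cmDYCK[1]$ from the first part. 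I expect no serious obstacle: the only place requiring care is the equality $\DYCK = \cmDYCK[1]$, where one must argue at the level of the generated congruences rather than just the languages, but the argument is short because $σ_1 v_1 \widebar{σ_1}$ with $v_1 ∈ \mathit{D}(Σ)$ is already $≡_Σ$-trivial. Everything else is bookkeeping plus citation of the two preceding results.
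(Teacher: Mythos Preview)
Your proposal is correct and follows the same approach as the paper: inclusions from the definition, strictness via \cref{obs:(k+1)-cmDYCK_not_k-MCFL} (you make the needed appeal to \cref{thm:multipleDyckIsMCFL} explicit, whereas the paper leaves it implicit), and the base equality by showing $≡_Σ = ≡_{Σ,\mathfrak{P}}$ for the singleton partition. One small imprecision: the specialised cancellation rule still carries the recursive side condition $v_1 ≡_{Σ,\mathfrak{P}} ε$, not merely $v_1 ∈ \mathit{D}(Σ)$, but your two-direction argument is unaffected since it only uses the instance $v_1 = ε$ in one direction and the hypothesis $v_1 ∈ \mathit{D}(Σ)$ in the other.
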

\begin{proof}
	We get ‘$⊆$’ from the definition of $\cmDYCK[k]$ and ‘$≠$’ from \cref{obs:(k+1)-cmDYCK_not_k-MCFL}.
	We have the equality since the dimension of some partition $\mathfrak{P}$ of $Σ$ is 1 if and only if $\mathfrak{P} = \{\{σ\} ∣ σ ∈ Σ\}$.
	Then we have ${≡_Σ} = {≡_{Σ, \mathfrak{P}}}$ and thus $\mathit{D}(Σ) = \mathit{mD}_{\text{c}}(Σ, \mathfrak{P})$.
	Hence $\DYCK = \cmDYCK[1]$.
\end{proof}

%%% Local Variables:
%%% mode: latex
%%% TeX-master: "cs-mcfg"
%%% End:

\subsection{Membership in a congruence multiple Dyck language}

We provide a recursive algorithm (\cref{alg:IsMultipleDyck}) to decide whether a word $w$ is in a given congruence multiple Dyck language $\mathit{mD}_{\text{c}}(Σ, \mathfrak{P})$.
This amounts to checking whether $w ≡_{Σ, \mathfrak{P}} ε$, and it suffices to only apply the cancellation rule from left to right.
% Every recursion step possibly applies the cancellation r

In order for \cref{alg:IsMultipleDyck} to decide the membership in a multiple Dyck language it must consider all decompositions of the input string into Dyck words.
For this purpose we define a function \textproc{split} that decomposes a given Dyck word into \emph{shortest, non-empty} Dyck words.

\subsubsection*{The function {\normalfont\textproc{split}}}

As Dyck languages are recognisable by pushdown automata, we define a data structure pushdown and two functions with side-effects on pushdowns.
A \emph{pushdown} is a string over some alphabet $Γ$.
Let $Γ$ be an alphabet, $γ ∈ Γ$, and $\mathit{pd} ⊆ Γ^*$ be a pushdown.
\begin{itemize}
	\item
		$\pdpop{\mathit{pd}}$ returns the left-most symbol of $\mathit{pd}$ and removes it from $\mathit{pd}$.
	\item
		$\pdpush{\mathit{pd}}{γ}$ prepends $γ$ to $\mathit{pd}$.
\end{itemize}
Note that $\pdpop{}$ is only a partial function, it is undefined for $\mathit{pd} = ε$.
But since the input word $w$ is required to be in $\mathit{D}(Σ)$ by \cref{alg:split}, the expression on line~6 is always defined.

\begin{algorithm}[ht]
	\caption{Algorithm to split a word in $\mathit{D}(Σ)$ into shortest non-empty strings from $\mathit{D}(Σ)$}
	\label{alg:split}
	\begin{algorithmic}[1]
		\Require
			alphabet $Σ$,\enspace
			Dyck word $w ∈ \mathit{D}(Σ)$
		\Ensure
			sequence $(u_1, …, u_ℓ)$ of shortest, non-empty Dyck words with $w = u_1 \cdots u_ℓ$
		\vspace{.5em}
		\Function{split}{$Σ, w$}
			\State let\; $\mathit{pd} = ε$,\; $j = 1$,\; and\; $u_j = ε$
			\For{$0 ≤ i ≤ \lvert w \rvert$}
				\State append $w_i$ to $u_j$
				\If{$w_i ∈ \widebar{Σ}$}
					\State $\pdpop{\mathit{pd}}$
					\If{$\mathit{pd} = ε$}
						\State increase $j$ by 1\; and\; let $u_j = ε$
					\EndIf
				\Else
					\State $\pdpush{\mathit{pd}}{\widebar{w_i}}$
				\EndIf
			\EndFor
			\State \Return $(u_1, …, u_{j-1})$
		\EndFunction
	\end{algorithmic}
\end{algorithm}

One can easily see that \textproc{split} is bijective (the inverse function is concatenation).
We therefore say that $w$ and $(u_1, …, u_ℓ)$ correspond to each other, and for every operation on either of them there is a corresponding operation on the other.
In particular the empty string corresponds to the empty tuple.

\subsubsection*{Outline of the function {\normalfont\textproc{isMember}}}

% http://tex.stackexchange.com/a/184228
\algnewcommand{\IIf}[1]{\State\algorithmicif\ #1\ \algorithmicthen}
\algnewcommand{\EndIIf}{\unskip\ \algorithmicend\ \algorithmicif}

\algnewcommand{\FFor}[1]{\State\algorithmicfor\ #1\ \algorithmicdo}
\algnewcommand{\EndFFor}{\unskip\ \algorithmicend\ \algorithmicfor}

\begin{algorithm*}[t]
	\caption{Function \textproc{isMember} to decide membership in $\mathit{mD}_{\text{c}}(Σ, \mathfrak{P})$}
	\label{alg:IsMultipleDyck}
  \begin{algorithmic}[1]
		\Require
			$Σ$,
			$\mathfrak{P}$, and
			$w ∈ (Σ ∪ \widebar{Σ})^*$
		\Ensure
			1 if $w ∈ \mathit{mD}_{\text{c}}(Σ, \mathfrak{P})$,\enspace
			0 otherwise
		\vspace{1em}
		\Function{isMember}{$Σ, \mathfrak{P}, w$}
      \IIf{$w = ε$}
        \enspace\Return{1}\enspace
      \EndIIf
			\IIf{$w \not∈ \mathit{D}(Σ)$}
				\enspace\Return{0}\enspace
			\EndIIf
			\State $(σ_1u_1\widebar{σ}_1, …, σ_{\ell}u_{\ell}\widebar{σ}_{\ell}) \gets \Call{split}{Σ, w}$
      \Comment{such that $σ_1, …, σ_{\ell} ∈ Σ$}
      \State $\mathcal{I} \gets \{I ⊆ \mathcal{P}([\ell]) \mid \text{$I$ partition of $[\ell]$}, \forall\{i_1, …, i_k\} ∈ I\colon \{σ_{i_1}, …, σ_{i_k}\} ∈ \mathfrak{P}\}$
      \For{$I ∈ \mathcal{I}$}
        \State $b \gets 1$
        \For{$\{i_1, …, i_k\} ∈ I$}
        \Comment{such that $i_1 < … < i_k$}
          \State $b \gets b \cdot \Call{isMember}{Σ, \mathfrak{P}, u_{i_1} \cdots u_{i_k}}$
        \EndFor
        \IIf{$b = 1$}
          \enspace\Return{1}\enspace
        \EndIIf
      \EndFor
      \State\Return{0}
    \EndFunction
  \end{algorithmic}
\end{algorithm*}

If $w$ is the empty word, we return 1 on line~2 since the empty word is in $\mathit{mD}_{\text{c}}(Σ, \mathfrak{P})$.
Then we check if $w$ is in $\mathit{D}(Σ)$, e.g. with the context-free grammar in (7.6) in Salomaa~\cite{sal73}.
If $w$ is not in $\mathit{D}(Σ)$, it is also not in $\mathit{mD}_{\text{c}}(Σ, \mathfrak{P})$ and we return 0.
Otherwise, we split $w$ into shortest non-empty Dyck words (on line~4) using the function \textproc{split}.
Since each of those shortest non-empty Dyck words has the form $σu\widebar{σ}$ for some $σ ∈ Σ$ and $u ∈ (Σ \cup \widebar{Σ})^*$, we write $(σ_1u_1\widebar{σ}_1, …, σ_{\ell}u_{\ell}\widebar{σ}_{\ell})$ for the left-hand side of the assignment on line~4.
On line~5 we calculate the set $\mathcal{I}$ of all partitions $I$ such that each element of $I$ specifies a set of components of the tuple $(σ_1u_1\widebar{σ}_1, …, σ_{\ell}u_{\ell}\widebar{σ}_{\ell})$ whose outer parentheses can be removed with one application of the cancellation rule.
Since $I$ is a partition, we know that the outer parentheses of every component can be removed via the cancellation rule.
Then it remains to be shown that there is a partition $I$ such that for each element $\{i_1, …, i_k\}$ of $I$ the word $u_{i_1} \cdots u_{i_k}$ is an element of $\mathit{mD}_{\text{c}}(Σ, \mathfrak{P})$; this is done on lines~6 to~12. 
If there is no such partition, then we return 0 on line~13.

\begin{table}[t]
  \caption{Run of \cref{alg:IsMultipleDyck} on the word $⟦()⟧[⟨⟩]$, cf. \cref{ex:multipleDyck,ex:IsMultipleDyck}.}
	\label{tab:runIsMultipleDyck}
  \begin{tabbing}
    l.\,11: \= $b = 1 \cdot {}$ \= l.\,11: \= $b = 1 \cdot {}$ \= l.\,4: \= \kill
    $\textproc{isMember}(Σ, \mathfrak{P}, ⟦()⟧[⟨⟩])$ \\
    l.\,4:  \> $σ_1 = {⟦}, σ_2 = {[}, u_1 = {(}{)}, u_2 = {⟨}{⟩}$ \\
    l.\,5:  \> $\mathcal{I} = \big\{ \big\{ \{1,2\} \big\} \big\}$ \\
    l.\,6:  \> $I = \big\{ \{1,2\} \big\}$ \\
    l.\,8:  \> $k = 2, i_1 = 1, i_2 = 2$ \\
    l.\,9:  \> $b = 1 \cdot \textproc{isMember}(Σ, \mathfrak{P}, {(}{)}{⟨}{⟩})$ \\
            \>           \> l.\,4:  \> $σ_1 = {(}, σ_2 = {⟨}, u_1 = ε = u_2$ \\
            \>           \> l.\,5:  \> $\mathcal{I} = \big\{ \big\{ \{1,2\} \big\} \big\}$ \\
            \>           \> l.\,6:  \> $I = \big\{ \{1,2\} \big\}$ \\
            \>           \> l.\,8:  \> $k = 2, i_1 = 1, i_2 = 2$ \\
            \>           \> l.\,9:  \> $b = 1 \cdot \textproc{isMember}(Σ, \mathfrak{P}, ε)$ \\
            \>           \>         \>               \> l.\,2: \> \textbf{return}~1 \\
            \>           \> l.\,9:  \> $b = 1 \cdot 1 = 1$ \\
            \>           \> l.\,11: \> \textbf{return}~1 \\
    l.\,9:  \> $b = 1 \cdot 1$ \\
    l.\,11: \> \textbf{return}~1
  \end{tabbing}
\end{table}

\begin{table}
  \caption{Run of \cref{alg:IsMultipleDyck} on the word $⟦()⟧[]⟦⟧[⟨⟩]$.}
  \label{tbl:ex2-main-alg}
  \begin{tabbing}
    l.\,11: \= $b = 1 \cdot {}$ \= l.\,11: \= $b = 1 \cdot {}$ \= l.\,4: \= \kill
    $\textproc{isMember}(Σ, \mathfrak{P}, ⟦()⟧[]⟦⟧[⟨⟩])$ \\
    l.\,4:  \> $σ_1= {⟦} = σ_3, σ_2 = {[} = σ_4, u_1 = (), u_2 = ε = u_3, u_4 = ⟨⟩$ \\
    l.\,5:  \> $\mathcal{I} = \big\{ \big\{ \{1,2\}, \{3,4\} \big\}, \big\{ \{1,4\}, \{2,3\} \big\} \big\}$ \\[.7em]
    l.\,6:  \> $I = \big\{ \{1,2\}, \{3,4\} \big\}$ \\
    l.\,8:  \> $k = 2, i_1 = 1, i_2 = 2$ \\
    l.\,9:  \> $b = 1 \cdot \textproc{isMember}(Σ, \mathfrak{P}, ())$ \\
            \>           \> l.\,4:  \> $σ_1 = {(}, u_1 = ε$ \\
            \>           \> l.\,5:  \> $\mathcal{I} = \emptyset$ \\
            \>           \> l.\,13: \> \textbf{return}~0 \\
    l.\,9:  \> $b = 1 \cdot 0 = 0$ \\
    l.\,8:  \> $k = 2, i_1 = 3, i_2 = 4$ \\
    l.\,9:  \> $b = 0 \cdot \textproc{isMember}(Σ, \mathfrak{P}, ⟨⟩)$ \\
            \>           \> l.\,4:  \> $σ_1 = {⟨}, u_1 = ε$ \\
            \>           \> l.\,5:  \> $\mathcal{I} = \emptyset$ \\
            \>           \> l.\,13: \> \textbf{return}~0 \\
    l.\,9:  \> $b = 0 \cdot 0 = 0$ \\[.7em]
    l.\,6:  \> $I = \big\{ \{1,4\}, \{2,3\} \big\}$ \\
    l.\,8:  \> $k = 2, i_1 = 1, i_2 = 4$ \\
    l.\,9:  \> $b = 1 \cdot \textproc{isMember}(Σ, \mathfrak{P}, ()⟨⟩)$ \\
            \>           \> l.\,4:  \> $σ_1 = {(}, σ_2 = {⟨}, u_1 = ε = u_2$ \\
            \>           \> l.\,5:  \> $\mathcal{I} = \big\{ \big\{ \{1,2\} \big\} \big\}$ \\
            \>           \> l.\,6:  \> $I = \big\{ \{1,2\} \big\}$ \\
            \>           \> l.\,8:  \> $b = 1 \cdot \textproc{isMember}(Σ, \mathfrak{P}, ε)$ \\
            \>           \>         \>          \> l.\,2: \> \textbf{return}~1 \\
            \>           \> l.\,8:  \> $b = 1 \cdot 1$ \\
            \>           \> l.\,11:  \> \textbf{return}~1 \\
    l.\,9:  \> $b = 1 \cdot 1$ \\
    l.\,8:  \> $k = 2, i_1 = 2, i_2 = 3$ \\
    l.\,9:  \> $b = 1 \cdot \textproc{isMember}(Σ, \mathfrak{P}, ε)$ \\
            \>           \> l.\,2: \> \textbf{return}~1 \\
    l.\,9:  \> $b = 1 \cdot 1$ \\[.7em]
    l.\,11: \> \textbf{return}~1
  \end{tabbing}
\end{table}

\begin{example}[\cref{ex:multipleDyck} continued]\label{ex:IsMultipleDyck}
	\Cref{tab:runIsMultipleDyck} shows a run of \cref{alg:IsMultipleDyck} on the word $⟦()⟧[⟨⟩]$ where we report return values and a subset of the variable assignment whenever we reach the end of lines~4, 5, 6, 8, 9.
	The recursive calls to $\textproc{isMember}$ are indented.
	\Cref{tbl:ex2-main-alg} shows the run of \cref{alg:IsMultipleDyck} on the word $⟦()⟧[]⟦⟧[⟨⟩]$.
\end{example}

In light of the close link between \cref{alg:IsMultipleDyck} and the relation $≡_{Σ, \mathfrak{P}}$ we omit the proof of correctness.

\begin{proof}[Proof of termination for \cref{alg:IsMultipleDyck}]
	If $w = ε$, the algorithm terminates on line~2.
	If $w \not∈\mathit{D}(Σ)$, the algorithm terminates on line~3.
  Since $\mathcal{I}$ is finite and each element $I ∈ \mathcal{I}$ is also finite, there are only finitely many calls to $\textproc{isMember}$ on line~9 for each recursion.
  In each of those calls, the length of the third argument is strictly smaller then the length of $w$.
  Therefore, after a finite number of recursions, the third argument passed to $\textproc{isMember}$ is either the empty word, then the algorithm terminates on line~2, or not an element of $\mathit{D}(Σ)$, then the algorithm terminates on line~2.
\end{proof}

\section{CS theorem for weighted MCFLs}
\label{sec:weighted_CS}

In this section we generalise the CS representation of (unweighted) MCFLs \cite[Theorem~3]{YosKajSek10} to the weighted case.
We prove that an $\mathcal{A}$-weighted MCFL $L$ can be decomposed into an $\mathcal{A}$-weighted alphabetic homomorphism $h$, a regular language $R$ and a congruence multiple Dyck language $\mathit{mD}_{\text{c}}$ such that $L = h(R ∩ \mathit{mD}_{\text{c}})$.

To show this, we use the proof idea from Droste and Vogler~\cite{DroVog13}:
We separate the weight from our grammar formalism and then use the unweighted CS representation on the unweighted part.
The outline of our proof is as follows:
\begin{enumerate}
	\item
		We separate the weights from $L$ (\cref{lem:weightSeparation}), obtaining an MCFL $L'$ and a weighted alphabetic homomorphism.
	\item
		We use a corollary of the CS representation of (unweighted) MCFLs (\cref{thm:CSForMCFL}) to obtain a CS representation of $L'$.
	\item
		Using the two previous points and a lemma for the composition of weighted and unweighted alphabetic homomorphisms (\cref{lem:HOM(S)_before_HOM=HOM(S)}), we obtain a CS representation of $L$ (\cref{thm:weightedCSForMCFL}).
\end{enumerate}

\begin{figure}
\centering
\begin{tikzpicture}[every node/.style={anchor=base,text height=.8em,text depth=.2em}, >=stealth']
	\matrix (m) [matrix of math nodes, column sep=5em, row sep=2em, inner ysep=.2em, inner xsep=.5em] {
		Δ^* → \mathcal{A} & Γ^*   & (Σ ∪ \bar{Σ})^* \\
		L(G)              & L(G_{\mathbb{B}}) & R(G_{\mathbb{B}}) ∩ \mathit{mD}(G_{\mathbb{B}}) \\[1em]
		D_G               & D_{G_{\mathbb{B}}} \\
	};
	\draw [->] (m-1-2) to node [above] {$\wts{G}$} (m-1-1);
	\draw [->] (m-1-3) to node [above] {$\hom{G_{\mathbb{B}}}$} (m-1-2);
	\draw [->, bend angle=30, bend right] (m-1-3) to node (h) [above] {$h$} (m-1-1);
	\draw [->] (m-2-2) to node [above] {$\wts{G}$} (m-2-1);
	\draw [->] (m-2-2) to node [right, xshift=2pt, very near end] {\textit{toDeriv}} (m-3-1);
	\draw [->] (m-2-3) to node [above] {$\hom{G_{\mathbb{B}}}$} (m-2-2);
	\draw [->] (m-2-3.south) to node [below, sloped] {$\fromBrackets$} (m-3-2.east);
	\draw [->] (m-3-1) to node [right, near end] {$(\yield, \hat{μ})$} (m-2-1);
	\draw [->] (m-3-1) to node [below] {$f$} (m-3-2);
	\draw [->] (m-3-2) to node [left] {$\yield$} (m-2-2);
	\draw [->] (m-3-2) to node [above, sloped] {$\toBrackets$} (m-2-3);
	\path (m-2-3) -- node [sloped] {$⊆$} (m-1-3);
	\path (m-2-2) -- node [sloped] {$⊆$} (m-1-2);
	\path (m-2-1) -- node [sloped] {$∈$} (m-1-1);

	\node (lhom) [fit=(m-1-1) (m-1-3) (h), fill=blue, fill opacity=.15, draw=blue, densely dashed, inner sep=.3em, rounded corners] {};
	\node (cCS) [fit=(m-2-3) (m-3-2), fill=red, fill opacity=.15, draw=red, densely dashed, inner sep=.6em, rounded corners] {};
	\node (lWS) [fit=(m-2-2) (m-3-1), fill=olive, fill opacity=.15, draw=olive, densely dashed, inner sep=.6em, rounded corners] {};

	\node [below=0pt of lWS.south west, anchor=north west] {\cref{lem:weightSeparation}};
	\node [below=0pt of cCS.south east, anchor=north east] {\cref{thm:CSForMCFL,cor:CSbijection}};
	\node [below=0pt of lhom.north east, anchor=south east] {\cref{lem:HOM(S)_before_HOM=HOM(S)}};
\end{tikzpicture}
\caption{Outline of the proof of \cref{thm:weightedCSForMCFL}}
\label{fig:weightedCSoutline}
\end{figure}

\Cref{fig:weightedCSoutline} outlines the proof of \cref{thm:weightedCSForMCFL}.
The boxes represent sub-diagrams for which the corresponding lemma proofs existence of the arrows and commutativity.

\subsection{Separating the weights}

We split a given weighted MCFG $G$ into an unweighted MCFG $G_{\mathbb{B}}$ and a weighted homomorphism $\wts{G}$ such that $\sem{G} = \wts{G}(L(G_{\mathbb{B}}))$.

\begin{definition}\label{def:boolean-part}
	Let $G = (N, Δ, S, P, μ)$ be a non-deleting $\mathcal{A}$-weighted $k$-MCFG.
	The \emph{unweighted MCFG for $G$} is the non-deleting $k$-MCFG $G_{\mathbb{B}} = (N, Γ, S, P')$ where $Γ = Δ ∪ \{ ρ^i ∣ ρ ∈ P, i ∈ \upto{\fanout(ρ)}\}$ and $P'$ is the smallest set such that for every production $ρ = A → [u_1, …, u_s](A_1, …, A_m) ∈ P$ there is a production \[A → [ρ^1 u_1, …, ρ^s u_s](A_1, …, A_m) ∈ P'\text{.}\qedhere\]
\end{definition}

\begin{definition}\label{def:weights-part}
	Let $G = (N, Δ, S, P, μ)$ be a non-deleting $\mathcal{A}$-weighted MCFG.
	The \emph{weight homomorphism for $G$} is the $\mathcal{A}$-weighted alphabetic homomorphism $\wts{G}: Γ^* → (Δ^* → \mathcal{A})$ where
		$\wts{G}(δ) = 1.δ$,
		$\wts{G}(ρ^1) = μ(ρ).ε$, and
		$\wts{G}(ρ^i) = 1.ε$ for every $δ ∈ Δ$, $ρ ∈ P$ and $i ∈ \{2, …, \fanout(ρ)\}$.
\end{definition}

$L(G_{\mathbb{B}})$ stands in bijection to $D_G$ via the function \textproc{toDeriv} given in \cref{alg:invYield}.

\begin{algorithm}[ht]
\caption{Function $\textproc{toDeriv}$ to calculate for every word in $L(G_{\mathbb{B}})$ the corresponding derivation in $D_G$, cf. \cref{lem:weightSeparation}}
\label{alg:invYield}
\begin{algorithmic}[1]
	\Require{$w ∈ L(G_{\mathbb{B}})$}
	\Ensure{derivation tree $t ∈ D_G$ corresponding to $w$ (represented as a partial function from $ℕ^*$ to $P$)}
	\vspace{.5em}
	\Function{toDeriv}{$w$}
	\State let $t$ be the empty function
	\State \Call{descend}{$t, ε, 1$}
	\State\Return $t$
	\EndFunction
	\vspace{.5em}
	\Procedure{descend}{$t:ℕ^* → P, π ∈ ℕ^*, j ∈ ℕ$}
		\State let $ρ = A → [u_1, …, u_s](A_1, …, A_k) ∈ P$ and $u$ such that $ρ^ju = w$
		\State add the assignment $π ↦ ρ$ to $t$
		\State remove $ρ^j$ from the beginning of $w$
		\For{\textbf{every} symbol $δ'$ in $u_j$}
			\If{$δ' ∈ Δ$}
				\State remove $δ'$ from the beginning of $w$
			\Else
				\State let $i, j'$ such that $x_i^{j'} = δ'$
				\State\Call{descend}{$t, πi, j'$}
			\EndIf
		\EndFor
	\EndProcedure
\end{algorithmic}
\end{algorithm}

\begin{lemma}\label{lem:weightSeparation}
	\enspace\(\MCFL[k](\mathcal{A}) = \aHOM[\mathcal{A}] \big(\MCFL[k]\big)\)
\end{lemma}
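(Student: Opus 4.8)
The statement is an equality of language classes, so I would prove the two inclusions separately, the nontrivial direction being $\MCFL[k](\mathcal{A}) \subseteq \aHOM[\mathcal{A}]\big(\MCFL[k]\big)$.

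For the inclusion $\MCFL[k](\mathcal{A}) \subseteq \aHOM[\mathcal{A}]\big(\MCFL[k]\big)$, take an $\mathcal{A}$-weighted language $L \in \MCFL[k](\mathcal{A})$. By definition there is an $\mathcal{A}$-weighted $k$-MCFG $G$ with $L = \sem{G}$, and by \cref{lem:normalform} we may assume $G = (N, \Delta, S, P, \mu)$ is non-deleting. Now form the unweighted $k$-MCFG $G_{\mathbb{B}}$ of \cref{def:boolean-part} and the $\mathcal{A}$-weighted alphabetic homomorphism $\wts{G}$ of \cref{def:weights-part}; clearly $L(G_{\mathbb{B}}) \in \MCFL[k]$ and $\wts{G} \in \aHOM[\mathcal{A}]$. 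The claim is that $\sem{G} = \wts{G}\big(L(G_{\mathbb{B}})\big)$, i.e. for every $w \in \Delta^*$,
\[
  \sem{G}(w) = \sum_{v \in L(G_{\mathbb{B}})} \widehat{\wts{G}}(v)(w)\text{.}
\]
The key structural fact is that $\textproc{toDeriv}$ (\cref{alg:invYield}) is a bijection between $L(G_{\mathbb{B}})$ and $D_G$: the decoration of each production's first component by $\rho^1$ (and the later components by $\rho^i$) records exactly which production was used and in what order the yield was assembled, so a word $v \in L(G_{\mathbb{B}})$ can be parsed deterministically back into its derivation tree $t = \textproc{toDeriv}(v) \in D_G$, and conversely $\yield(t) \mapsto v$ is the inverse. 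I would prove bijectivity by induction on the structure of derivations, mirroring the recursion of $\textproc{descend}$: each recursive call consumes one symbol $\rho^j$ followed by the terminals and, recursively, the arguments occurring in component $u_j$ of $\rho$. Under this bijection, if $t = \textproc{toDeriv}(v)$ then $\yield(t) = (w)$ exactly when $\widehat{\wts{G}}$ maps $v$ to $w$ (the $\delta$-symbols in $v$ survive as terminals, the $\rho^i$-symbols are erased), and the weight $\widehat{\wts{G}}(v)(w)$ equals the product over positions of $v$ of the assigned weights, which is precisely $\prod_{\rho^1 \text{ occurs in } v} \mu(\rho) = \widehat{\mu}(t)$ since only the $\rho^1$ symbols carry non-unit weight and each production contributes exactly one such symbol. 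Summing over $v \in L(G_{\mathbb{B}})$ with $\widehat{\wts{G}}$ mapping $v$ to $w$ therefore reproduces $\sum_{t \in D_G(w)} \widehat{\mu}(t) = \sem{G}(w)$.

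For the reverse inclusion $\aHOM[\mathcal{A}]\big(\MCFL[k]\big) \subseteq \MCFL[k](\mathcal{A})$, start from an unweighted $k$-MCFL $L' = L(H)$ for a $k$-MCFG $H = (N, \Gamma, S, R)$ (which we may assume non-deleting and with only productive non-terminals) and an $\mathcal{A}$-weighted alphabetic homomorphism $\widehat{g} \in \aHOM[\mathcal{A}]$ given by $h : \Gamma \to (\Delta \cup \{\varepsilon\} \to \mathcal{A})$ with $h(\gamma) = \mu_\gamma.\theta(\gamma)$ for some $\mu_\gamma \in \mathcal{A}\setminus\{0\}$ and $\theta(\gamma) \in \Delta \cup \{\varepsilon\}$. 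I would build an $\mathcal{A}$-weighted $k$-MCFG $G$ for $\widehat{g}(L')$ by textually substituting $\theta(\gamma)$ for every terminal $\gamma$ occurring in the string functions of $R$ and moving the weights onto productions: replace each production $\rho = A \to f(A_1,\dots,A_m)$ of $H$ by the production $\rho' = A \to f'(A_1,\dots,A_m)$ where $f'$ is $f$ with every terminal $\gamma$ replaced by $\theta(\gamma)$, and set $\mu(\rho') = \prod_{\gamma} \mu_\gamma^{n_\gamma(\rho)}$ where $n_\gamma(\rho)$ is the number of occurrences of $\gamma$ in the string function of $\rho$. (If some $\mu_\gamma = 0$ this product can vanish; but then simply delete any production using such a $\gamma$, since it contributes weight $0$ — here completeness and the fact that $0 \cdot a = 0$ is used. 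Also the finiteness of $\Gamma$ keeps these products finite.) Derivations of $H$ and derivations of $G$ are in an obvious structure-preserving bijection, $\yield$ of corresponding trees agree after the substitution (this is exactly what $\widehat{g}$ does to the yield), and the product of production weights along a $G$-derivation equals $\widehat{\wts{}}$ applied to the corresponding $H$-yield; summing over derivations of a fixed $w$ and reorganising the sum (infinitary associativity/commutativity of $\sum$) gives $\sem{G}(w) = \sum_{v \in L', \; \widehat{g}(v) \text{ maps to } w} \widehat{g}(v)(w) = \widehat{g}(L')(w)$. The fan-out is unchanged, so $\widehat{g}(L') \in \MCFL[k](\mathcal{A})$.

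The main obstacle is the first direction, and specifically the careful verification that $\textproc{toDeriv}$ is a well-defined bijection $L(G_{\mathbb{B}}) \to D_G$ and that it intertwines $\widehat{\wts{G}}$ with $(\yield, \widehat{\mu})$ — in other words, checking that the left box of \cref{fig:weightedCSoutline} commutes. The delicate points are: (a) that the greedy parsing in $\textproc{descend}$ has no choice, i.e. that reading $\rho^j$ forces $\rho$ and $j$ uniquely (this relies on the $\rho^i$ symbols being fresh and distinct per production and component, and on $H = G_{\mathbb{B}}$ being non-deleting so that every argument $x_i^{j'}$ of $u_j$ really does get expanded and consumed in order); and (b) the weight bookkeeping — that exactly one occurrence of a weight-bearing symbol ($\rho^1$) appears per use of $\rho$, so the multiplicative weight of a word matches $\widehat{\mu}$ of its derivation. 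Once this bijection is in hand, the sum manipulation is a routine application of the completeness axioms (iv) for $\sum$, splitting the sum over $L(G_{\mathbb{B}})$ into fibres over $D_G$.
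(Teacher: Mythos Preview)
Your proposal is correct and follows essentially the same approach as the paper: for $\subseteq$ you pass to a non-deleting grammar via \cref{lem:normalform}, form $G_{\mathbb B}$ and $\wts{G}$, use the bijection $L(G_{\mathbb B})\leftrightarrow D_G$ given by \textproc{toDeriv} (\cref{alg:invYield}), and compare the two sums; for $\supseteq$ you push the alphabetic homomorphism through the productions of a non-deleting $k$-MCFG and collect the terminal weights as production weights, exactly as the paper does (the paper phrases this via $\supp(h'(u_i))$, which automatically drops productions hitting a zero-weight symbol, matching your deletion step). Your write-up is in fact more explicit than the paper's about the two delicate points---uniqueness in \textproc{descend} and the one-$\rho^1$-per-production weight bookkeeping---but the underlying argument is the same.
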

% \begin{proof}[Idea of the proof.]\let\qed\relax
% 	$(⊆)$
% 	For a given weighted MCFG $G$ we construct an (unweighted) MCFG $G'$ and a weighted homomorphism $h$.
% 	For every rule (of fan-out $m$) we add symbols
% 	For every rule $ρ = A → [u_1, …, u_m](B_1, …, B_ℓ)$ of a given weighted MCFG we create a rule $ρ = A → [u_1, …, u_m](B_1, …, B_ℓ)$ for the (unweighted) MCFG by adding symbols to the generated string tuple that are unique for $ρ$.
% 	For every $ρ$ the weighted homomorphism assigns $ρ$
% 	$(⊇)$
% 	Given an unweighted MCFG and a homomorphism we construct rules of a weighted MCFG as follows:
% 	We extend the homomorphism to assign $1.x$ for every variable $x$.
% 	For every rule $ρ$ in the unweighted MCFG we let the extensions of the homomorphism run on the components of the string function of $ρ$.
% 	The product of the weights assigned to the components becomes the weight of our rule.
% \end{proof}
\begin{proof}
	\proofpartns{($⊆$)} Let $L ∈ \MCFL[k](\mathcal{A})$.
	By \cref{lem:normalform} there is a non-deleting $\mathcal{A}$-weighted $k$-MCFG $G = (N, Δ, S, P, μ)$ such that $\sem{G} = L$.
	Let $f$ be the function obtained by applying the construction of the rules in $G_{\mathbb{B}}$ position-wise to a derivation in $D_G$.
% 	The function $f$ only inserts symbols into the functions in the productions; by removing these symbols, we get the original function, hence $f$ is bijective.
% 	Let $g: D_{G'} → L(G')$ be the function that assigns for every derivation $d ∈ D_{G'}$ the word in $L(G')$ obtained by interpreting the term corresponding to $d$.
	For every $w ∈ L(G_{\mathbb{B}})$ we can calculate the corresponding derivation $t$ in $G$ (as a function with domain $\dom(t)$ and labelling function $t$) using \textproc{toDeriv} (\cref{alg:invYield}), hence $\yield ∘ f$ is bijective.
	We derive for every $w ∈ Δ^*$:
	\begin{align*}
		L(w)
		&= \sem{G}(w) \\*
		&= \textstyle∑_{d ∈ D_G(w)} μ(d) \\
% 		&= \big(\textstyle∑_{d ∈ D_G(w)} μ(d)\big) + 0\\
		&= \textstyle∑_{d ∈ D_G} (\wts{G} ∘ \yield ∘ f)(d)(w)
			\tag{by $\dagger$} \\
		&= \textstyle∑_{\substack{d ∈ D_G, u ∈ L(G_{\mathbb{B}}) \\ u = (\yield ∘ f)(d)}} \wts{G}(u)(w) \\
		&= \textstyle∑_{u ∈ L(G_{\mathbb{B}})} \wts{G}(u)(w)
			\tag{$L(G_{\mathbb{B}})$ and $D_G$ are in bijection} \\*
		&= \wts{G}(L(G_{\mathbb{B}}))(w)
	\end{align*}
	For $\dagger$, one can immediately see from the definitions of $f$, $\yield$, and $\wts{G}$ that for every $w ∈ Δ^*$ we have $(\wts{G} ∘ \yield ∘ f)(d)(w) = μ(d)$ if $d ∈ D_G(w)$ and $(\wts{G} ∘ \yield ∘ f)(d)(w) = 0$ otherwise.
	Hence $L = \wts{G}(L(G_{\mathbb{B}}))$.
	
	\proofpart{($⊇$)}
	Let $L ∈ \MCFL[k]$ and $h: Γ^* → (Δ^* → \mathcal{A})$ be an $\mathcal{A}$-weighted alphabetic homomorphism.
	By Seki, Matsumura, Fujii, and Kasami~\cite[Lemma~2.2]{SekMatFujKas91} there is a non-deleting $k$-MCFG $G = (N, Γ, S, P)$ such that $L(G) = L$.
	We construct the $\mathcal{A}$-weighted $k$-MCFG $G' = (N, Δ, S, P', μ)$ as follows:
	We extend $h$ to $h': (Γ ∪ X)^* → ((Δ ∪ X)^* → \mathcal{A})$ where
		$h'(x) = 1.x$ for every $x ∈ X$ and
		$h'(γ) = h(γ)$ for every $γ ∈ Γ$.
	We define $P'$ as the smallest set such that
		for every
			$ρ = A → [u_1, …, u_s](A_1, …, A_m) ∈ P_{(s_1 \cdots s_m, s)}$ and
			$(u_1', …, u_s') ∈ \supp(h'(u_1)) × … × \supp(h'(u_s))$
		we have that $P'$ contains the production $ρ' = A → [u_1', …, u_s'](A_1, …, A_m)$ and
		$μ(ρ') = h'(u_1)(u_1') ⋅ … ⋅ h'(u_s)(u_s')$.
	Since ${⋅}$ is commutative and $G$ is non-deleting, we obtain $\sem{G'} = h(L(G))$.%
% 	\footnote{The same two constructions also work to show that $\MCFL[k](\mathcal{A}) = \HOM[\mathcal{A}] \big(\MCFL[k]\big)$.}
\end{proof}

By setting $k = 1$ in the above lemma we reobtain the equivalence of 1 and 3 in Theorem~2 of Droste and Vogler~\cite{DroVog13} for the case of complete commutative strong bimonoids.

\begin{example}\label{ex:weightSeparation}
	Recall the $\mathrm{Pr}_2$-weighted MCFG $G$ from \cref{ex:wmcfg}.
	By \cref{def:boolean-part,def:weights-part} we obtain
	the MCFG $G_{\mathbb{B}} = \big( N, Γ, S, P' \big)$
	where
		\[Γ = \{a, b, c, d, ρ_1^1, ρ_2^1, ρ_2^2, ρ_3^1, ρ_3^2, ρ_4^1, ρ_4^2, ρ_5^1, ρ_5^2 \}\] and
		$P'$ is given by
	\begin{align*}
		P': \enspace
		ρ_1' &= S → [ρ_1^1 x_1^1 x_2^1 x_1^2 x_2^2](A, B) \\*
		ρ_2' &= A → [ρ_2^1 a x_1^1, ρ_2^2 c x_1^2](A)
		&ρ_4' &= A → [ρ_4^1, ρ_4^2]() \\*
		ρ_3' &= B → [ρ_3^1 b x_1^1, ρ_3^2 d x_1^2](B)
		&ρ_5' &= B → [ρ_5^1, ρ_5^2]()\text{,}
	\end{align*} and
	the $\mathcal{A}$-weighted alphabetic homomorphism $\wts{G}: Γ^* → (Δ^* → \mathcal{A})$ where $\wts{G}$ is given for every $γ ∈ Γ$ and $ω ∈ Δ ∪ \{ε\}$ by
	\[
		\wts{G}(γ)(ω) =
		\begin{cases}
			μ(ρ_i) &\text{if $γ = ρ_i^1$ and $ω = ε$ for $1 ≤ i ≤ 5$} \\
			1 &\text{if $γ = ρ_i^2$ and $ω = ε$ for $2 ≤ i ≤ 5$} \\
			1 &\text{if $γ ∈ Δ$ and $ω = γ$} \\
			0 &\text{otherwise,}
		\end{cases}
	\]
	Now consider the (only) derivation $d = ρ_1\big(ρ_2(ρ_4), ρ_5\big)$ of $w = ac$.
	Then
	\begin{align*}
		f(d)
			&= ρ_1'\big(ρ_2'(ρ_4'), ρ_5'\big)
			&&=: d'\text{,} \\
		g(d')
			&= ρ_1^1\; ρ_2^1\; a\; ρ_4^1\; ρ_5^1\; ρ_2^2\; c\; ρ_4^2\; ρ_5^2
			&&=: w'\text{, and} \\
		\wts{G}(w')
			&= (1 ⋅ 1/2 ⋅ 1 ⋅ 1/2 ⋅ 2/3 ⋅ 1 ⋅ 1 ⋅ 1 ⋅ 1).w
			&&= (1/6).w\;\text{.} \qedhere
	\end{align*}
\end{example}

\subsection{Strengthening the unweighted CS representation}

Yoshinaka, Kaji, and Seki~\cite{YosKajSek10} define (in Section~3.2) an indexed alphabet $Δ$, a right-linear regular grammar $R$, and a homomorphism $h$ for some given non-deleting MCFG $G$ that has no rule with at least two identical non-terminals on the right-hand side.
We will sometimes write $\hom{G}$ instead of $h$ to highlight the connection to $G$.
Let $S$ be the initial non-terminal of $G$.
Then $R$ can be viewed as an automaton $ℳ(G)$ by setting $S$ as the initial state, $T$ as final state, and having for every rule $A → wB$ in $R$ (where $A$ and $B$ are non-terminals and $w$ is a terminal string) a transition $(A, w, B)$ in $ℳ(G)$.
For every MCFG $G$ with the above three properties we define the \emph{generator automaton with respect to $G$} as $ℳ(G)$ and the \emph{generator language with respect to $G$} as $R(G) = L(ℳ(G))$.
Note that $ℳ(G)$ is deterministic.
We call $\widehat{Δ}$ the \emph{generator alphabet with respect to $G$}.
By \cref{prop:congruence_mDyck} we know that $L(D_Δ)$ \cite[Definition~1]{YosKajSek10} is a congruence multiple Dyck language, we will denote it by $\mathit{mD}(G)$ to highlight its connection to the MCFG $G$.
For every terminal symbol $γ ∈ Γ$, let $\tilde{γ}$ abbreviate the string ${⟦_γ^{[1]}} {⟧_γ^{[1]}}$.
We give an example to show how the above considerations are used.
\begin{example}[{\cref{ex:wmcfg,ex:weightSeparation} continued}] \label{ex:CSForMCFL}
	\Cref{fig:checkerR} shows the FSA $ℳ(G')$.
	An edge labelled with a set $L$ of words denotes a set of transitions each reading a word in $L$.
	Note that $R(G')$ is not finite.
	Let $Σ$ be the generator alphabet with respect to $G'$
	The homomorphism $\hom{G_{\mathbb{B}}}: Σ → Γ^*$ is given by
	\[
		\hom{G_{\mathbb{B}}}(σ) =
		\begin{cases}
			γ &\text{if $σ = ⟦_γ$ for some $γ ∈ Γ$} \\
			ε &\text{otherwise.}
		\end{cases}\qedhere
	\]
\end{example}

The following is a corollary to Yoshinaka, Kaji, and Seki~\cite[Theorem~3]{YosKajSek10} where “homomorphism” is replaced by an “alphabetic homomorphism” and “multiple Dyck language” is replaced by “congruence multiple Dyck language”.

\begin{corollary}
% \hspace{-.5em}\footnote{The proof can be found in \citet[Corollary~16]{Den15}.}
\label{thm:CSForMCFL}
	Let $L$ be a language and $k ∈ ℕ$.
	Then the following are equivalent:
	\begin{enumerate}
		\item $L ∈ \MCFL[k]$
		\item
			there are
				an alphabetic homomorphism $h_2$,
				a regular language $R$, and
				a congruence multiple Dyck language $\mathit{mD}_{\text{c}}$ of at most dimension $k$
			with
				$L = h_2(R ∩ \mathit{mD}_{\text{c}})$.
	\end{enumerate}
\end{corollary}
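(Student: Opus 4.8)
The plan is to derive both implications from the unweighted Chomsky–Schützenberger representation of Yoshinaka, Kaji, and Seki~\cite[Theorem~3]{YosKajSek10}, together with \cref{prop:congruence_mDyck} (which tells us that the multiple Dyck languages occurring there are in fact congruence multiple Dyck languages of the appropriate dimension) and \cref{thm:multipleDyckIsMCFL} (which gives the reverse inclusion $\cmDYCK[k] \subseteq \MCFL[k]$). The technical point is that Theorem~3 of~\cite{YosKajSek10} produces a general homomorphism $h$, whereas we want an \emph{alphabetic} one; absorbing the non-alphabetic part of $h$ into the regular language is exactly the job that the symbols $\tilde\gamma = {⟦_γ^{[1]}}{⟧_γ^{[1]}}$ introduced just before the statement are meant to do.

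\textbf{Direction (i)$\Rightarrow$(ii).} Given $L \in \MCFL[k]$, first apply \cref{lem:normalform} (unweighted case, i.e. \cite[Lemma~2.2]{SekMatFujKas91}) and, if necessary, a standard splitting of non-terminals so that the generating MCFG $G$ is non-deleting and has no production with two identical non-terminals on its right-hand side; this does not increase the fan-out, so $G$ is still a $k$-MCFG. Now invoke~\cite[Theorem~3]{YosKajSek10}: there are a generator alphabet $\widehat{Δ}$, a regular generator language $R(G) = L(ℳ(G))$, a multiple Dyck language $\mathit{mD}(G)$ of dimension at most $k$, and a homomorphism $h = \hom{G}$ with $L = h\big(R(G) ∩ \mathit{mD}(G)\big)$. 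By \cref{prop:congruence_mDyck}, $\mathit{mD}(G) \in \cmDYCK[k]$, which already gives us a congruence multiple Dyck language of dimension at most $k$. To turn $h$ into an alphabetic homomorphism $h_2$, I would inspect the definition of $h$ in~\cite[Section~3.2]{YosKajSek10}: each generator letter is mapped either to $ε$ or to a single terminal, \emph{except} possibly for letters whose image is a length-$\geq 2$ terminal string coming from the terminal occurrences in the right-hand sides of productions. For each such production symbol, replace in $ℳ(G)$ the single transition reading it by a short chain of transitions, each labelled by one generator letter of a fresh ``padding'' alphabet, and redefine $h_2$ to emit the individual terminal symbols along that chain (sending the padding letters used for the matching $⟧$'s to $ε$, exactly as $\tilde\gamma$ suggests). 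The resulting automaton still recognises a regular language $R'$, intersecting with the same $\mathit{mD}_{\text{c}}$ (extended by the trivially-matched new bracket pairs, which stays within dimension $k$), and $h_2$ is now alphabetic with $L = h_2\big(R' ∩ \mathit{mD}_{\text{c}}\big)$.

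\textbf{Direction (ii)$\Rightarrow$(i).} This is the easy direction and needs no new construction: an alphabetic homomorphism is in particular a homomorphism, $R$ is regular, and by \cref{thm:multipleDyckIsMCFL} we have $\mathit{mD}_{\text{c}} \in \cmDYCK[k] \subseteq \MCFL[k]$. Since $\MCFL[k]$ is a substitution-closed full AFL~\cite[Theorem~3.9]{SekMatFujKas91} — in particular closed under intersection with regular languages and under homomorphisms — it follows that $h_2(R ∩ \mathit{mD}_{\text{c}}) \in \MCFL[k]$, i.e. $L \in \MCFL[k]$.

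\textbf{Main obstacle.} The genuinely delicate step is making the homomorphism alphabetic in (i)$\Rightarrow$(ii) while keeping three things under control simultaneously: the language must not change, the regular language must stay regular (the chain-replacement is local, so this is fine), and the dimension of the multiple Dyck language must not exceed $k$. The last is the real constraint: the padding bracket pairs must be introduced as singleton blocks of the partition $\mathfrak{P}$ (dimension $1$), so that $\max_{\mathfrak{p} ∈ \mathfrak{P}} |\mathfrak{p}|$ is unaffected, and one must check that these new pairs interleave correctly with the existing structure so that membership in $\mathit{mD}_{\text{c}}$ is preserved under the rewriting. Verifying this carefully — i.e. that the ``de-alphabetisation'' bookkeeping commutes with the cancellation rule of $≡_{Σ,\mathfrak{P}}$ — is where the proof will take most of its effort; everything else is assembling already-proved facts.
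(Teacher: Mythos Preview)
Your direction (ii)$\Rightarrow$(i) is exactly the paper's argument: $\cmDYCK[k] \subseteq \MCFL[k]$ by \cref{thm:multipleDyckIsMCFL}, then close under intersection with regular languages and homomorphisms via \cite[Theorem~3.9]{SekMatFujKas91}.

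For (i)$\Rightarrow$(ii), however, your ``main obstacle'' does not exist. The homomorphism $\hom{G}$ constructed in \cite[Section~3.2]{YosKajSek10} is \emph{already} alphabetic: every bracket symbol is sent either to $\varepsilon$ or to a single terminal. The encoding $\tilde{\gamma} = {\llbracket_\gamma^{[1]}}{\rrbracket_\gamma^{[1]}}$ that you correctly identified is not a device you still need to introduce; it is precisely how Yoshinaka, Kaji, and Seki handle terminal occurrences in the first place (one opening bracket per terminal, mapped to that terminal, and its matching closing bracket mapped to $\varepsilon$; cf.\ \cref{ex:CSForMCFL}). So there is no ``non-alphabetic part'' to absorb, no padding alphabet to add, no dimension bound to re-verify, and no interaction with the cancellation rule to check. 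The paper's entire proof of this direction is one sentence: the existing construction already satisfies the definition of an alphabetic homomorphism, and $\mDYCK[k] \subseteq \cmDYCK[k]$ by \cref{prop:congruence_mDyck}.

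Your proposed chain-replacement construction is plausible and would probably go through, but it is a detour around a non-problem. The only thing to fix in your write-up is to replace the entire ``main obstacle'' paragraph with a direct inspection of the definition of $\hom{G}$ in \cite[Section~3.2]{YosKajSek10}.
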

\begin{proof}
	The construction of the homomorphism in Yoshinaka, Kaji, and Seki~\cite[Section~3.2]{YosKajSek10} already satisfies the definition of an alphabetic homomorphism.
	We may use a congruence multiple Dyck language instead of a multiple Dyck language since,
		for (i) $⇒$ (ii),
			$\mDYCK[k] ⊆ \cmDYCK[k]$ (\cref{prop:congruence_mDyck}) and,
		for (ii) $⇒$ (i),
			$\cmDYCK[k] ⊆ \MCFL[k]$ (\cref{thm:multipleDyckIsMCFL}) and $\MCFL[k]$ is closed under intersection with regular languages and under homomorphisms \cite[Theorem~3.9]{SekMatFujKas91}.
\end{proof}

\begin{figure}[t]
  \centering
	\begin{tikzpicture}[node distance=7em, >=stealth', bend angle=30, every loop/.append style={looseness=10}, baseline=(S1.base)]
	\node (S1) [state, initial] {$S^{[1]}$};
	\node (A1) [state, right=of S1] {$A^{[1]}$}
		edge [pre] node [above] {$⟦_{ρ_1'}^{[1]} \tilde{ρ}_1^1 ⟦_{ρ_1',1}^{[1]}$} (S1)
		edge [loop right] node [right] {$⟦_{ρ_2'}^{[1]} \tilde{ρ}_2^1 \tilde{a} ⟦_{ρ_2',1}^{[1]}$} ();
	\node (T) [state, accepting, below=of A1] {$T$}
		edge [pre] node [left, near end] {$⟦_{ρ_4'}^{[1]} \tilde{ρ}_4^1 ⟧_{ρ_4'}^{[1]}$} (A1)
		edge [bend angle=30, out=30, in=60, looseness=10,->] node [above, xshift=5em, align=left] {$\big\{ ⟧_{ρ_1',2}^{[2]} ⟧_{ρ_1}^{[1]},\; ⟧_{ρ_2',1}^{[1]} ⟧_{ρ_2'}^{[1]},$ \\ $\hphantom{\big\{}⟧_{ρ_2',1}^{[2]} ⟧_{ρ_2'}^{[2]},\; ⟧_{ρ_3',1}^{[1]} ⟧_{ρ_3'}^{[1]},\; ⟧_{ρ_3',1}^{[2]} ⟧_{ρ_3'}^{[2]} \big\}$} (T);
	\node (A2) [state] at ($(T) + (-190:12em)$) {$A^{[2]}$}
		edge [post, bend left] node [above] {$⟦_{ρ_4'}^{[2]} \tilde{ρ}_4^2 ⟧_{ρ_4'}^{[2]}$} (T)
		edge [pre, bend right] node [above] {$⟧_{ρ_1',2}^{[1]} ⟦_{ρ_1',1}^{[2]}$} (T)
		edge [loop below] node [below] {$⟦_{ρ_2'}^{[2]} \tilde{ρ}_2^2 \tilde{c} ⟦_{ρ_2',1}^{[2]}$} ();
	\node (B1) [state] at ($(T) + (-110:12em)$) {$B^{[1]}$}
		edge [post, bend left] node [left, near start] {$⟦_{ρ_5'}^{[1]} \tilde{ρ}_5^1 ⟧_{ρ_5'}^{[1]}$} (T)
		edge [pre, bend right] node [right, near start] {$⟧_{ρ_1',1}^{[1]} ⟦_{ρ_1',2}^{[1]}$} (T)
		edge [loop left] node [left] {$⟦_{ρ_3'}^{[1]} \tilde{ρ}_3^1 \tilde{b} ⟦_{ρ_3',1}^{[1]}$} ();
	\node (B2) [state] at ($(T) + (-30:12em)$) {$B^{[2]}$}
		edge [post, bend left] node [right, xshift=-1.3em, yshift=1.3em] {$⟦_{ρ_5'}^{[2]} \tilde{ρ}_5^2 ⟧_{ρ_5'}^{[2]}$} (T)
		edge [pre, bend right] node [right, yshift=.5em] {$⟧_{ρ_1',1}^{[2]} ⟦_{ρ_1',2}^{[2]}$} (T)
		edge [loop below] node [below] {$⟦_{ρ_3'}^{[2]} \tilde{ρ}_3^2 \tilde{d} ⟦_{ρ_3',1}^{[2]}$} ();
\end{tikzpicture}
	\caption{Automaton $ℳ(G_{\mathbb{B}})$ (cf. \cref{ex:CSForMCFL})}
	\label{fig:checkerR}
\end{figure}

\begin{lemma}\label{cor:CSbijection}
	For every MCFG $G$, there is a bijection between $D_G$ and $R(G) ∩ \mathit{mD}(G)$.
\end{lemma}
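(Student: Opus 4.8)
The plan is to build the bijection explicitly from the maps that already appear in the proof architecture, namely the pair $(\toBrackets, \fromBrackets)$ sketched in \cref{fig:weightedCSoutline}. The construction of Yoshinaka, Kaji, and Seki~\cite[Section~3.2]{YosKajSek10} associates to each derivation $d \in D_G$ a bracket string $\toBrackets(d) \in (\widehat{Δ})^*$ in a way that is transparently invertible: the brackets introduced at a position $π$ of $d$ record exactly which production $d(π)$ was applied there and how its fan-out components are threaded together, so reading the bracket string left to right reconstructs $d$ top-down. I would first define $\toBrackets: D_G \to (\widehat{Δ})^*$ by recursion on the structure of derivations, mirroring the recursion used to define $R(G)$ and $\hom{G}$, and then define $\fromBrackets$ as the obvious parsing procedure (analogous to \textproc{toDeriv} in \cref{alg:invYield}, but targeting the bracket alphabet rather than $Γ$).

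The key steps, in order, are: (1) Show $\toBrackets(d) \in R(G)$ for every $d \in D_G$. This is an induction on the structure of $d$: the automaton $ℳ(G)$ is precisely designed so that the bracket string emitted while descending through a derivation traces an accepting run — the transitions out of the initial state $S$ correspond to the root production, the loop and branching transitions correspond to recursive subderivations, and the transitions into the final state $T$ close off the open components. (2) Show $\toBrackets(d) \in \mathit{mD}(G)$ for every $d \in D_G$. Here I would use \cref{prop:congruence_mDyck}: $\mathit{mD}(G)$ is a congruence multiple Dyck language, so it suffices to check that $\toBrackets(d)$ reduces to $ε$ under $≡_{Σ, \mathfrak{P}}$, where the partition $\mathfrak{P}$ groups together the $\fanout(ρ)$ brackets belonging to a single production occurrence. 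Each application of a production in $d$ contributes one linked family of matching bracket pairs, and these cancel simultaneously exactly as the cancellation rule prescribes; an induction on the number of positions of $d$ (equivalently, on the number of cancellation steps) closes this. (3) Show $\fromBrackets \circ \toBrackets = \id_{D_G}$ and $\toBrackets \circ \fromBrackets = \id_{R(G) \cap \mathit{mD}(G)}$. The first direction is immediate from the definitions by structural induction. For the second, given $w \in R(G) \cap \mathit{mD}(G)$, membership in $R(G)$ guarantees that the greedy left-to-right parse of $w$ into production-labelled bracket blocks succeeds and produces a well-formed tree skeleton, while membership in $\mathit{mD}(G)$ guarantees that the fan-out components are matched consistently (no component of a production is linked to a component of a different production occurrence), so $\fromBrackets(w)$ is a genuine element of $D_G$ and re-bracketing it returns $w$.

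The main obstacle I expect is step (2) together with the matching half of step (3): the interplay between the \emph{regular} constraint (which controls the linear left-to-right order of brackets and the tree shape) and the \emph{multiple Dyck} constraint (which controls the cross-serial linking of fan-out components). A derivation in an MCFG with fan-out $> 1$ interleaves the components of a non-terminal across the string, so a single production occurrence emits brackets that are \emph{not} contiguous in $\toBrackets(d)$; one must verify carefully that exactly the right brackets are collected into each element of the partition $\mathfrak{P}$ and that no accidental alternative matching is permitted by $≡_{Σ, \mathfrak{P}}$ — this is the crux, and it is where the ``linked parentheses'' intuition behind \cref{def:mDYCK} does the real work. I would handle it by making the bracket alphabet carry enough decoration (the production name $ρ$ and component indices, as in \cref{fig:checkerR}) that the partition $\mathfrak{P}$ separates the occurrences unambiguously, so that any $≡_{Σ, \mathfrak{P}}$-reduction of $\toBrackets(d)$ is forced to peel off production occurrences in an order compatible with the tree structure of $d$, yielding a clean induction. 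The remaining bookkeeping — that $\yield$, $\hom{G}$, and $\wts{G}$ commute appropriately through this bijection — is routine and is exactly what \cref{fig:weightedCSoutline} records, so I would state it as a consequence rather than belabour it here.
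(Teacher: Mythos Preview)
Your plan is essentially the paper's: extract $\toBrackets$ and $\fromBrackets$ from the constructions in Lemmas~1 and~3 of Yoshinaka, Kaji, and Seki~\cite{YosKajSek10} and argue they are mutual inverses. The paper does this in one paragraph by simply pointing to those lemmas and recording the compatibility $\hom{G} \circ \toBrackets = \yield$ and $\yield \circ \fromBrackets = \hom{G}$; your proposal is a fleshed-out version of the same argument, with the cosmetic difference that you verify $\mathit{mD}(G)$-membership via the congruence $≡_{Σ,\mathfrak{P}}$ rather than via the grammar $G_Δ$.

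One point in your step~(3) needs correction. You write that the decoration on the bracket alphabet lets ``the partition $\mathfrak{P}$ separate the occurrences unambiguously''. It does not: $\mathfrak{P}$ groups brackets by \emph{type} (production name and component index), so if a production $ρ$ appears at two positions of $d$ the corresponding brackets are indistinguishable in the partition, and a $≡_{Σ,\mathfrak{P}}$-reduction is \emph{not} a~priori forced to match the pairs belonging to the same occurrence. Ruling out the wrong matchings is precisely the content of Lemma~3 in \cite{YosKajSek10}, and it uses the regular constraint $R(G)$ in an essential way (the nesting enforced by the Dyck structure together with the state discipline of $ℳ(G)$ pins down which opening bracket of type $⟦_ρ^{[j]}$ belongs with which closing bracket). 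Your ``main obstacle'' paragraph correctly identifies this as the crux; just be aware that decoration alone cannot resolve it, and you will need the interplay with $R(G)$ that you mention earlier rather than the partition by itself.
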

\begin{proof}
	The constructions in Lemmas~1 and~3 in Yoshinaka, Kaji, and Seki~\cite{YosKajSek10} already hint on the bijection between $R(G) ∩ \mathit{mD}(G)$ and $D_G$, we will merely point out the respective functions $\toBrackets: D_G → R(G) ∩ \mathit{mD}(G)$ and $\fromBrackets: R(G) ∩ \mathit{mD}(G) → D_G$ here.

	We examine the proof of Lemma~1 in Yoshinaka, Kaji, and Seki~\cite{YosKajSek10}.
	They construct for every rule $A → f(B_1, …, B_k)$ in $G$ and all tuples $\bar{u}_1, …, \bar{u}_k$ that are generated by $B_1, …, B_k$, respectively, a new tuple $\bar{u} = (u_1, …, u_m)$ such that $\bar{u} ∈ L(G_Δ)$, for each $i ∈ [m]$, $ℳ(G)$ recognises $u_i$ on the way from $A^{[i]}$ to $T$, and $\hom{G}(\bar{u}) = f(\hom{G}(τ_1), …, \hom{G}(τ_k))$, where $\hom{G}$ is applied to tuples component-wise.
	Now we only look at the initial non-terminal $S$.
	Then $\bar{u}$ has only one component and this construction can be conceived as a function $\toBrackets: D_G → R(G) ∩ \mathit{mD}(G)$ such that $\hom{G} ∘ \toBrackets = \yield$.
	
	In Lemma~3, Yoshinaka, Kaji, and Seki~\cite{YosKajSek10} give a construction for the opposite direction by recursion on the structure of derivations in $G_Δ$.
	In a similar way as above, we view this construction as a function $\fromBrackets: R(G) ∩ \mathit{mD}(G) → D_G$ such that $\yield ∘ \fromBrackets = \hom{G}$.
	Then $\hom{G} ∘ \toBrackets ∘ \fromBrackets = \hom{G}$, and hence $\toBrackets ∘ \fromBrackets$ is the identity on $R(G) ∩ \mathit{mD}(G)$.
\end{proof}

\subsection{Composing the homomorphisms}

\begin{lemma}
% \hspace{-.5em}\footnote{The proof can be found in \citet[Lemma~18]{Den15}.}
\label{lem:HOM(S)_before_HOM=HOM(S)}
	\enspace\( \aHOM[\mathcal{A}] ∘ \aHOM = \aHOM[\mathcal{A}] \)
\end{lemma}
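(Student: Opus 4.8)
The plan is to prove the two inclusions of $\aHOM[\mathcal{A}] \circ \aHOM = \aHOM[\mathcal{A}]$ separately. The inclusion ``$\supseteq$'' is immediate: for an $\mathcal{A}$-weighted alphabetic homomorphism $\widehat{g}\colon \Delta^* \to (\Gamma^* \to \mathcal{A})$, let $\widehat{\id}\colon \Delta^* \to \Delta^*$ be the identity homomorphism (induced by $\delta \mapsto 1.\delta$, which is clearly alphabetic, and which as a function on $\Delta^*$ is the identity); then $\widehat{g} = \widehat{g} \circ \widehat{\id} \in \aHOM[\mathcal{A}] \circ \aHOM$. So the content lies in ``$\subseteq$''.

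Before tackling ``$\subseteq$'', I would record a closed form for the value of an alphabetic homomorphism on a word. If $\widehat{g}$ is $\mathcal{A}$-weighted alphabetic with witness $g_0\colon \Delta \to (\Gamma \cup \{\varepsilon\} \to \mathcal{A})$, write each monomial $g_0(\delta)$ as $\mu_\delta.w_\delta$ with $\mu_\delta \in \mathcal{A}$ and $w_\delta \in \Gamma \cup \{\varepsilon\}$ (taking $\mu_\delta = 0$ when $g_0(\delta)$ is the zero function). Then for every $\delta_1 \cdots \delta_k \in \Delta^*$ one has $\widehat{g}(\delta_1 \cdots \delta_k) = (\mu_{\delta_1} \cdots \mu_{\delta_k}).(w_{\delta_1} \cdots w_{\delta_k})$: in the defining sum, which ranges over the \emph{finitely many} decompositions $v = v_1 \cdots v_k$ of a fixed word $v$, each factor $g_0(\delta_i)(v_i)$ equals $\mu_{\delta_i}$ if $v_i = w_{\delta_i}$ and $0$ otherwise, so by $0 \cdot a = 0 = a \cdot 0$ the product vanishes unless $v = w_{\delta_1} \cdots w_{\delta_k}$, in which case exactly one summand, namely $\mu_{\delta_1} \cdots \mu_{\delta_k}$, survives. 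This also covers the unweighted case over $\mathbb{B}$, where $\widehat{h}$ with witness $h_0\colon \Delta \to \Lambda \cup \{\varepsilon\}$ maps $\delta_1 \cdots \delta_k$ to the concatenation $h_0(\delta_1) \cdots h_0(\delta_k) \in \Lambda^*$.

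Now take $\widehat{h}\colon \Delta^* \to \Lambda^*$ alphabetic with witness $h_0$ and $\widehat{g}\colon \Lambda^* \to (\Gamma^* \to \mathcal{A})$ $\mathcal{A}$-weighted alphabetic with witness $g_0$, and define $m\colon \Delta \to (\Gamma \cup \{\varepsilon\} \to \mathcal{A})$ by $m(\delta) = g_0(h_0(\delta))$ when $h_0(\delta) \in \Lambda$ and $m(\delta) = 1.\varepsilon$ when $h_0(\delta) = \varepsilon$. Each $m(\delta)$ is a monomial supported in $\Gamma \cup \{\varepsilon\}$, so $\widehat{m} \in \aHOM[\mathcal{A}]$. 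It remains to check $\widehat{m} = \widehat{g} \circ \widehat{h}$, which I would do by evaluating both sides on an arbitrary $\delta_1 \cdots \delta_k$ via the closed form: on the right, $\widehat{h}(\delta_1 \cdots \delta_k)$ is the concatenation of the non-empty $h_0(\delta_i)$, so $\widehat{g}(\widehat{h}(\delta_1 \cdots \delta_k))$ is the monomial whose weight and word are the products, over $\{ i : h_0(\delta_i) \in \Lambda \}$, of the $\mu$'s and $w$'s of $g_0(h_0(\delta_i))$; on the left, the positions with $h_0(\delta_i) = \varepsilon$ contribute the neutral weight $1$ and the empty word to $\widehat{m}(\delta_1 \cdots \delta_k)$ and hence can be dropped, leaving the same two products. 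Matching them — using only neutrality of $1$ and associativity of $\cdot$ — gives the claim, and since $\delta_1 \cdots \delta_k$ was arbitrary we conclude $\widehat{m} = \widehat{g} \circ \widehat{h} \in \aHOM[\mathcal{A}]$.

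The one step requiring care is the equality $\widehat{m} = \widehat{g} \circ \widehat{h}$: it is purely a matter of tracking which input positions $\widehat{h}$ deletes, but it rests on the closed form above, whose only ingredients are $0 \cdot a = 0 = a \cdot 0$, associativity and neutrality of $\cdot$, neutrality of $0$ for $+$, and the finiteness of the set of decompositions of a fixed word — so distributivity and the infinitary sum are never invoked.
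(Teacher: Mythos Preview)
Your proof is correct and follows the same approach as the paper: the identity homomorphism for ``$\supseteq$'', and composing the two letter-level witnesses for ``$\subseteq$''. The paper's argument for ``$\subseteq$'' is considerably terser --- it simply observes that $h_1(\range(h_2')) \subseteq (\Delta \cup \{\varepsilon\} \to \mathcal{A})$ and concludes that the composite is alphabetic --- whereas you spell out the closed form for $\widehat{g}$ on words and verify $\widehat{m} = \widehat{g} \circ \widehat{h}$ explicitly; this is exactly the justification the paper leaves to the reader.
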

\begin{proof}
	\proofpartns{($⊆$)}
	Let $h_1: Γ^* → (Δ^* → \mathcal{A})$ be an alphabetic $\mathcal{A}$-weighted homomorphism and $h_2: Σ^* → Γ^*$ be an alphabetic homomorphism.
	By the definitions of $\aHOM[\mathcal{A}]$ and $\aHOM$ there must exist $h_1': Γ → (Δ ∪ \{ε\} → \mathcal{A})$ and $h_2': Σ → Γ ∪ \{ε\}$ such that $\widehat{h_1'} = h_1$ and $\widehat{h_2'} = h_2$.
	Since $h_1(\range(h_2')) ⊆ (Δ ∪ \{ε\} → \mathcal{A})$ there is some $h ∈ \aHOM[\mathcal{A}]$ such that $h = h_1 ∘ h_2$; hence $h_1 ∘ h_2 ∈ \aHOM[\mathcal{A}]$.
	\proofpart{($⊇$)}
	Let $h: Σ → (Γ^* → \mathcal{A})$ be an alphabetic $\mathcal{A}$-weighted homomorphism.
	Clearly $i: Σ^* → Σ^*$ with $i(w) = w$ for every $w ∈ Σ^*$ is an alphabetic homomorphism.
	Then we have $h ∘ i = h$.
\end{proof}

\begin{example}[{\cref{ex:weightSeparation,ex:CSForMCFL} continued}]\label{ex:HOM(S)_before_HOM=HOM(S)}
	The homomorphism $h: (Σ ∪ \bar{Σ})^* → (Δ^* → \mathcal{A})$ obtained from $\wts{G}: Γ^* → (Δ^* → \mathcal{A})$ and $\hom{G_{\mathbb{B}}}: (Σ ∪ \bar{Σ})^* → Γ^*$ by the construction for $⊆$ in \cref{lem:HOM(S)_before_HOM=HOM(S)} is given for every $σ ∈ Σ$ and $ω ∈ Δ ∪ \{ε\}$ by
	\[
		h(σ)(ω) =
		\begin{cases}
			μ(ρ_i)
				&\text{if $σ = ⟦_{ρ_i^1}$ and $ω = ε$ for some $i ∈ [5]$} \\
			1 &\text{if $σ ∉ \{{⟦_{ρ_i^1}} ∣ i ∈ [5]\} ∪ \{ {⟦_δ} ∣ δ ∈ Δ\}$ and $ω = ε$} \\
			1 &\text{if $σ = ⟦_δ$ and $ω = δ$ for some $δ ∈ Δ$} \\
			0 &\text{otherwise.}
		\end{cases}\qedhere
	\]
\end{example}

\subsection{The weighted CS representation}

\begin{theorem}
% \hspace{-.5em}\footnote{The proof can be found in \citet[Theorem~19]{Den15}.}
\label{thm:weightedCSForMCFL}
	Let $L$ be an $\mathcal{A}$-weighted language over $Σ$ and $k ∈ ℕ$. The following are equivalent:
	\begin{enumerate}
		\item\label{item:weightedCSForMCFL:1}
			$L ∈ \MCFL[k](\mathcal{A})$
		\item\label{item:weightedCSForMCFL:2}
			there are
				an $\mathcal{A}$-weighted alphabetic homomorphism $h$,
				a regular language $R$, and
				a congruence multiple Dyck language $\mathit{mD}$ of dimension at most $k$ with
				\( L = h(R ∩ \mathit{mD})\).
	\end{enumerate}
\end{theorem}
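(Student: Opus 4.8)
The plan is to prove both implications purely by composing the three results prepared in the preceding subsections, following the diagram in \cref{fig:weightedCSoutline}; no new construction is required. The heavy lifting lies in \cref{lem:weightSeparation}, \cref{thm:CSForMCFL}, and \cref{lem:HOM(S)_before_HOM=HOM(S)}, and the theorem is their chaining.

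For \ref{item:weightedCSForMCFL:1} $\Rightarrow$ \ref{item:weightedCSForMCFL:2}: starting from $L \in \MCFL[k](\mathcal{A})$, I first invoke \cref{lem:weightSeparation} to write $L = h_1(L')$ for some $\mathcal{A}$-weighted alphabetic homomorphism $h_1 \in \aHOM[\mathcal{A}]$ and some unweighted language $L' \in \MCFL[k]$. Applying \cref{thm:CSForMCFL} to $L'$ produces an alphabetic homomorphism $h_2$, a regular language $R$, and a congruence multiple Dyck language $\mathit{mD}$ of dimension at most $k$ with $L' = h_2(R \cap \mathit{mD})$. Hence $L = h_1\bigl(h_2(R \cap \mathit{mD})\bigr) = (h_1 \circ h_2)(R \cap \mathit{mD})$, and by \cref{lem:HOM(S)_before_HOM=HOM(S)} the composite $h \coloneqq h_1 \circ h_2$ is again an $\mathcal{A}$-weighted alphabetic homomorphism, which is exactly the form required in \ref{item:weightedCSForMCFL:2}.

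For \ref{item:weightedCSForMCFL:2} $\Rightarrow$ \ref{item:weightedCSForMCFL:1}: given $L = h(R \cap \mathit{mD})$ with $h \in \aHOM[\mathcal{A}]$, $R$ regular, and $\mathit{mD} \in \cmDYCK[k]$, I note that $\mathit{mD} \in \MCFL[k]$ by \cref{thm:multipleDyckIsMCFL} and that $\MCFL[k]$ is closed under intersection with regular languages \cite[Theorem~3.9]{SekMatFujKas91}, so $R \cap \mathit{mD} \in \MCFL[k]$. Therefore $L = h(R \cap \mathit{mD}) \in \aHOM[\mathcal{A}]\bigl(\MCFL[k]\bigr) = \MCFL[k](\mathcal{A})$, again by \cref{lem:weightSeparation}.

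The entire substance of the theorem has thus been delegated to the three ingredients, so the only points that still need care are the compatibility conditions: that the fan-out/dimension bound $k$ is threaded consistently (it is preserved by \cref{lem:weightSeparation}, by \cref{thm:CSForMCFL}, and by \cref{thm:multipleDyckIsMCFL}), and that the class of alphabetic homomorphisms produced by \cref{thm:CSForMCFL} coincides with the right-hand factor accepted in \cref{lem:HOM(S)_before_HOM=HOM(S)}. I expect the structure-preserving bijection of \cref{cor:CSbijection} (via \textit{toBrackets}/\textit{fromBrackets}) to be what underlies the weight accounting inside \cref{thm:CSForMCFL}, but at the level of this theorem that detail is already packaged within the corollary. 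I do not anticipate a genuine obstacle: because \cref{lem:weightSeparation} has isolated all weight handling into a single weighted alphabetic homomorphism, the structural work is carried out entirely by the unweighted Chomsky–Schützenberger representation, and the weighted statement follows by the above two short chains.
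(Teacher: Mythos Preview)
Your proposal is correct and matches the paper's proof essentially line for line: both directions are obtained by chaining \cref{lem:weightSeparation}, \cref{thm:CSForMCFL}, and \cref{lem:HOM(S)_before_HOM=HOM(S)} for (i)$\Rightarrow$(ii), and \cref{thm:multipleDyckIsMCFL}, the closure properties of $\MCFL[k]$, and \cref{lem:weightSeparation} for (ii)$\Rightarrow$(i). Your aside about \cref{cor:CSbijection} underlying weight accounting inside \cref{thm:CSForMCFL} is slightly misplaced (that corollary is unweighted; the bijection is exploited separately in \cref{cor:weightedCSbijection}), but this does not affect the argument.
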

\begin{proof}
	\proofpartns{\ref{item:weightedCSForMCFL:1} $⇒$ \ref{item:weightedCSForMCFL:2}}
	There are some $L' ∈ \MCFL[k]$, $h, h_1 ∈ \aHOM[\mathcal{A}]$, $h_2 ∈ \aHOM$, $\mathit{mD} ∈ \cmDYCK[k]$, and $R ∈ \REG$ such that
	\begin{align*}
		L
		&= h_1(L')
			\tag{by \cref{lem:weightSeparation}} \\*
		&= h_1(h_2(R ∩ \mathit{mD}))
			\tag{by \cref{thm:CSForMCFL}} \\*
		&= h(R ∩ \mathit{mD})
			\tag{by \cref{lem:HOM(S)_before_HOM=HOM(S)}}
	\end{align*}
	\proofpart{\ref{item:weightedCSForMCFL:2} $⇒$ \ref{item:weightedCSForMCFL:1}}
	We use \cref{thm:multipleDyckIsMCFL,lem:weightSeparation}, and the closure of $\MCFG[k]$ under intersection with regular languages and application of homomorphisms.
\end{proof}

\begin{proposition}\label{cor:weightedCSbijection}
	For every $\mathcal{A}$-weighted MCFG $G$, there is a bijection between $D_G$ and $R(G_{\mathbb{B}}) ∩ \mathit{mD}(G_{\mathbb{B}})$.
\end{proposition}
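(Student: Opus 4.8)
The plan is to produce the bijection by composing two maps that are already available in the paper: the relabelling $f$ that sends a derivation of the weighted grammar $G$ to the corresponding derivation of its unweighted companion $G_{\mathbb{B}}$, and the unweighted Chomsky--Schützenberger bijection of \cref{cor:CSbijection} applied to $G_{\mathbb{B}}$. Note that the weight assignment $μ$ plays no role whatsoever here --- this is exactly in the spirit of ``separating the weights'', and the bijection we construct will depend only on the underlying unweighted grammar.

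First I would spell out that $f\colon D_G → D_{G_{\mathbb{B}}}$ is a bijection. By \cref{def:boolean-part}, $G_{\mathbb{B}} = (N, Γ, S, P')$ is obtained from the non-deleting $\mathcal{A}$-weighted $k$-MCFG $G = (N, Δ, S, P, μ)$ by replacing each production $ρ = A → [u_1, …, u_s](A_1, …, A_m) ∈ P$ with $ρ' = A → [ρ^1 u_1, …, ρ^s u_s](A_1, …, A_m) ∈ P'$. The map $ρ ↦ ρ'$ is surjective by construction, injective because $ρ$ can be recovered from the tag $ρ^1$ occurring in $ρ'$, and it preserves the sort $(\sort(A), \sort(A_1) ⋯ \sort(A_m))$, since prepending the symbols $ρ^i$ alters neither the fan-out nor the argument arities or sorts of the string function. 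Hence it lifts to a sort-preserving bijection of the term algebras $T_P → T_{P'}$, which restricts to a bijection $f\colon D_G = (T_P)_S → (T_{P'})_S = D_{G_{\mathbb{B}}}$; this is precisely the function $f$ used in the proof of \cref{lem:weightSeparation}.

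Then I would invoke \cref{cor:CSbijection}. Since $G_{\mathbb{B}}$ is, by construction, a non-deleting $k$-MCFG --- and, if one wishes, we may use \cref{lem:normalform} to put $G$, and hence $G_{\mathbb{B}}$ since \cref{def:boolean-part} preserves such structural properties, into the normal form under which the generator automaton $ℳ(\cdot)$, the generator language $R(\cdot)$, the congruence multiple Dyck language $\mathit{mD}(\cdot)$, and the maps $\toBrackets$, $\fromBrackets$ are defined --- \cref{cor:CSbijection} provides a bijection $\toBrackets\colon D_{G_{\mathbb{B}}} → R(G_{\mathbb{B}}) ∩ \mathit{mD}(G_{\mathbb{B}})$ with inverse $\fromBrackets$. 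The composition $\toBrackets ∘ f\colon D_G → R(G_{\mathbb{B}}) ∩ \mathit{mD}(G_{\mathbb{B}})$ is then the desired bijection, with inverse $f^{-1} ∘ \fromBrackets$; this is exactly the composite $\toBrackets ∘ f$ appearing along the bottom of \cref{fig:weightedCSoutline}.

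There is no deep obstacle here: all the substance sits in \cref{cor:CSbijection}, and the remainder is the routine bijectivity of $f$. The only thing requiring a moment's care is the bookkeeping of hypotheses --- confirming that $G_{\mathbb{B}}$ really does meet the requirements under which the objects $R(G_{\mathbb{B}})$, $\mathit{mD}(G_{\mathbb{B}})$, and $\toBrackets$ of \cref{cor:CSbijection} are defined (non-deletion, and no right-hand side carrying two identical non-terminals), which is why the normal-form remark above is included.
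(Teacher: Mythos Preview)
Your proposal is correct and follows essentially the same strategy as the paper: compose the bijection $D_G \cong D_{G_{\mathbb{B}}}$ coming from the weight-separation construction with the unweighted Chomsky--Schützenberger bijection of \cref{cor:CSbijection}. The paper's written proof routes the first step through $L(G_{\mathbb{B}})$ (using $\yield \circ f$ and $\textproc{toDeriv}$ from the proof of \cref{lem:weightSeparation}) rather than arguing directly that $f$ is a bijection, but this is the same composite and matches the diagram in \cref{fig:weightedCSoutline}; your version is slightly more direct and your explicit bookkeeping of the normal-form hypotheses is a welcome addition.
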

\begin{proof}
	There are bijections
		between $D_G$ and $L(G_{\mathbb{B}})$ by claims in the proof of \cref{lem:weightSeparation},
		between $L(G_{\mathbb{B}})$ and $D_{G_{\mathbb{B}}}$ by claims in the proof of \cref{lem:weightSeparation}, and
		between $D_{G_{\mathbb{B}}}$ and $R(G_{\mathbb{B}}) ∩ \mathit{mD}(G_{\mathbb{B}})$ by \cref{cor:CSbijection}.
\end{proof}

%%% Local Variables:
%%% mode: latex
%%% TeX-master: "cs-mcfg"
%%% End:

\section{Conclusion and outlook}

We
	defined multiple Dyck languages using congruence relations in \cref{def:mDYCK},
	gave an algorithm to decide whether a word is in a given multiple Dyck language in \cref{alg:IsMultipleDyck}, and
	established that multiple Dyck languages with increasing maximal dimension form a hierarchy in \cref{prop:mDyckHierarchy}.

We obtained a weighted version of the CS representation of MCFLs for complete commutative strong bimonoids (\cref{thm:weightedCSForMCFL}) by separating the weights from the weighted MCFG and using Yoshinaka, Kaji, and Seki~\cite[Theorem~3]{YosKajSek10} for the unweighted part.

\Cref{thm:weightedCSForMCFL,cor:weightedCSbijection} may be used to derive a parsing algorithm for weighted multiple context-free grammars in the spirit of Hulden~\cite{Hul11}.

%%% Local Variables:
%%% mode: latex
%%% TeX-master: "cs-mcfg"
%%% End:

\bibliographystyle{alpha-doi}
\bibliography{cs-mcfg}

\end{document}